\documentclass{article}                    
\usepackage{vmargin}
\setmarginsrb{1.1in}{1.1in}{1.1in}{1.1in}{0pt}{5mm}{0pt}{7mm}
\usepackage[noadjust]{cite} 
\usepackage{graphicx}
\usepackage{amssymb,amsmath,textcomp}
\usepackage{amsthm}
\usepackage{color}
\usepackage{hyperref}
\usepackage{tabularx}
\usepackage[justification=centering]{caption}
\usepackage{latexsym}
\usepackage{authblk}
\usepackage{enumerate}
\usepackage[utf8]{inputenc}
\usepackage{comment}

\newcommand{\remove}[1]{}

\newcommand{\qExpLem}{{\sf $q$-Expansion Lemma}}

\newcommand{\defparproblem}[4]{
 \vspace{3mm}
\noindent\fbox{
  \begin{minipage}{.95\textwidth}
  \begin{tabular*}{\textwidth}{@{\extracolsep{\fill}}lr} \textsc{#1} \\ \end{tabular*}
  {\bf{Input:}} #2  \\
  {\bf{Parameter:}} #3\\
  {\bf{Question:}} #4
  \end{minipage}
  }
  \vspace{2mm}
}

\newcommand{\defproblem}[3]{
  \vspace{3mm}
\noindent\fbox{
  \begin{minipage}{.95\textwidth}
  \begin{tabular*}{\textwidth}{@{\extracolsep{\fill}}lr} #1  \\ \end{tabular*}
  {\bf{Input:}} #2  \\
  {\bf{Question:}} #3
  \end{minipage}
  }
  \vspace{2mm}
  }

\usepackage{microtype}

\newtheorem{definition}{\bf Definition}[section]

\newtheorem{reduction rule}{\bf Reduction Rule}[section]
\newtheorem{proposition}{\bf Proposition}[section]
\newtheorem{lemma}{\bf Lemma}[section]

\newcommand{\ECT}{{\sc Even Cycle Transversal}}
\newcommand{\CVD}{{\sc Cluster Vertex Deletion}}
\newcommand{\VC}{{\sc Vertex Cover}}
\newcommand{\PWOneVD}{{\sc Pathwidth One Vertex Deletion}}
\newcommand{\FVS}{{\sc Feedback Vertex Set}}
\newcommand{\ChordVD}{{\sc Chordal Vertex Deletion}}
\newcommand{\BlockVD}{{\sc Block Graph Vertex Deletion}}
\newcommand{\ThreePathPacking}{{\sc $3$-Path Packing}}
\newcommand{\TPT}{{\sc Triangle Packing in Tournaments}}
\newcommand{\FVST}{{\sc Feedback Vertex Set in Tournaments}}
\newcommand{\OFVDS}{{\sc Out-Forest Vertex Deletion Set}}

\newcommand{\dPVC}{{\sc $d$-Path Vertex Cover}}
\newcommand{\tPDCP}{{\sc Pairwise $t$-Disjoint Cycle Packing}}
\newcommand{\tADCP}{{\sc Almost $t$-Disjoint Cycle Packing}}
\newcommand{\IFVS}{{\sc Independent Feedback Vertex Set}}

\newcommand{\WSeparation}{{\sc $W$-Weight Separator}}
\newcommand{\WPacking}{{\sc $W$-Weight Packing}}
\newcommand{\BCP}{{\sc Max-Min (Min-Max) Balanced Connected Partition}}

\newcommand{\ListColor}{{\sc List Coloring}}
\newcommand{\ArcRCP}{{\sc Arc Disjoint $r$-Cycle Packing}}
\newcommand{\ArcFourCP}{{\sc Arc Disjoint $4$-Cycle Packing}}

\newcommand{\CC}{{\mathcal C}}

\newcommand{\FF}{\ensuremath{\mathcal{F}}\xspace}

\newcommand{\HH}{{\mathcal H}}

\newcommand{\NN}{{\mathcal N}}
\newcommand{\OO}{\mathcal{O}}
\newcommand{\PP}{\ensuremath{\mathcal{P}}\xspace}

\newcommand{\UU}{{\mathcal U}}

\newcommand{\XX}{{\mathcal X}}

\newcommand{\nn}{{\mathbb N}}

\newcommand{\cc}{{\mathbb C}}

\newcommand{\fR}{{\mathfrak R}}

\newcommand{\nka}{${\sf NP \subseteq coNP/poly}$}

\usepackage{tikz}
\usetikzlibrary{decorations.shapes,decorations.pathreplacing,decorations.pathmorphing}
\usetikzlibrary{arrows,matrix,shapes}
\usetikzlibrary{positioning}
\tikzset{
        stars/.style={star,inner sep=2pt}
    }

\usepackage{todonotes}
\usepackage[ruled,vlined,linesnumbered]{algorithm2e}

\title{Expansion Lemma - Variations and Applications to Polynomial-time Preprocessing}

\author[1]{Ashwin Jacob}
\author[2]{Diptapriyo Majumdar}
\author[3]{Venkatesh Raman}
\affil[1]{Ben Gurion University of the Negev, Beersheva, Israel\\
  \texttt{ashwinj@bgu.ac.il}}
\affil[2]{Indraprastha Institute of Information Technology Delhi, New Delhi, India\\
	\texttt{diptapriyo@iiitd.ac.in}}
\affil[3]{The Institute of Mathematical Sciences, HBNI, Chennai, India\\
	\texttt{vraman@imsc.res.in}}

\bibliographystyle{plain}

\begin{document}
\maketitle

\begin{abstract}
In parameterized complexity, it is well-known that a parameterized problem is fixed-parameter tractable if and only if it has a kernel - an instance equivalent to the input instance, whose size is just a function of the parameter. The size of the kernel can be exponential or worse, resulting in a quest for fixed-parameter tractable problems with a polynomial-sized kernel. The developments in machinery to show lower bounds for the sizes of the kernel gave rise to the question of the asymptotically optimum size for the kernel of fixed-parameter tractable problems.

In this article, we survey a tool called expansion lemma that helps in reducing the size of the kernel. Its early origin is in the form of Crown Decomposition for obtaining linear kernel for the {\VC} problem and the specific lemma was identified as the tool behind the optimal $O(k^2)$ kernel for the undirected feedback vertex set problem. Since then, several variations and extensions of the tool have been discovered. We survey them along with their applications in this article. 

\end{abstract}

\newpage

\tableofcontents

\newpage

\section{Introduction}
\label{sec:intro}
Preprocessing is an age-old technique in the implementation of algorithms, particularly for hard problems.
With the advent of parameterized complexity, the theory of preprocessing got a lease of formal life. A parameterized problem comes with a parameter apart from the input instance. The goal of parameterized complexity is to see whether one can design an algorithm whose combinatorial explosion can be confined to the parameter, while the rest of the running time is polynomial in the input size. Such an algorithm is said to be a fixed-parameter algorithm, and a problem admitting such an algorithm is said to be fixed-parameter tractable.

It is well-known (see, for example, \cite{CFKLMPPS15}) that a decidable parameterized problem is fixed-parameter tractable if and only if the input instance can be reduced in polynomial time to an equivalent instance, whose size is a function of just the parameter. Such an equivalent instance is said to be a kernel.
The size of the kernel can be exponential or worse, resulting in a quest for fixed-parameter tractable problems with a polynomial-sized kernel. Machinery to show the non-existence of polynomial-sized kernels~\cite{bodlaender2009problems,dom2014kernelization}, and more specific lower bounds for kernel sizes~\cite{dell2014satisfiability} under complexity-theoretic assumptions, made the theory more interesting.  

In this article, we survey a tool called expansion lemma that helps in reducing the size of the kernel. Its early origin is in the form of Crown Decomposition for obtaining linear kernel for the {\VC} problem \cite{chor2004linear}, and the specific lemma was identified as the tool behind the optimal $O(k^2)$ kernel for the {\sc Undirected {\FVS}} problem \cite{Thomasse2010}. Since then, several variations and extensions of the tool have been discovered. In this paper, we survey them along with their applications.
We have tried to provide sketches to as many results as possible wherever the main (and probably only) tool for the application is the expansion lemma variation.
For several problems, there are several other ideas involved in obtaining the improved kernel beyond the tool.
Also, wherever the main tool is expansion lemaa, we have tried explain the algorithm in a self-contained way, we have done that.
For others, we have tried to indicate how the tool is applied, and often we didn’t go into the other complicated technical details as they distract the main purpose of the survey.

\noindent
{\bf Organisation of the Paper:} In Section \ref{sec:prelim}, we introduce basic terminologies and notations.
We begin the technical part of the paper in the following three sections.
In Section \ref{sec:crown}, we introduce the crown decomposition technique, basic lemmas, and their application to obtain some improved kernels.
In Section \ref{sec:expansion-lemma-basic}, we introduce the expansion lemma and its use for several problems in kernelization.
Finally, in Section \ref{sec:new-double-expansion-lemma}, we discuss the recent developments, including generalizations of the (basic) expansion lemma, along with their applications.
We conclude with some possible future research directions in Section \ref{sec:conclusion}.

\section{Preliminaries}
\label{sec:prelim}

\paragraph{Parameterized Complexity and Kernelization:}
A parameterized problem $L$ is a subset of $\Sigma^* \times \nn$ where $\Sigma$ is a finite alphabet.
It is assumed that $k \in \nn$ is given in unary and an instance to a parameterized problem is $(x, k)$ where $x \in \Sigma^*$ and $k \in \nn$.

\begin{definition}[Fixed-Parameter Tractability]
\label{defn:FPT}
A parameterized problem $L \subseteq \Sigma^* \times \nn$ is said to be {\em fixed-parameter tractable} (or {\em FPT}) if there exists an algorithm for solving this problem $L$ that on input $(x, k)$, runs in $f(k)|x|^{\OO(1)}$-time where $f: \nn \rightarrow \nn$ is a computable function.
\end{definition}

\begin{definition}[Kernelization]
\label{defn:kernelization}
Let $L \subseteq \Sigma^* \times \nn$ be a parameterized problem.
A {\em kernelization} for $L$ is a polynomial time ($(|x| + k)^{\OO(1)}$-time) procedure that replaces an input instance $(x, k)$ by an input instance $(x',k')$ such that
\begin{itemize}
	\item $(x,k) \in L$ if and only if $(x',k') \in L$.
	\item $|x'| + k' \leq g(k)$ for some computable function $g: \nn \rightarrow \nn$.
\end{itemize}	
\end{definition}

It is well-known that a decidable parameterized problem is FPT if and only if it admits a kernelization~\cite{CFKLMPPS15}.
Kernelization can be viewed as a theoretical foundation of {\em preprocessing heuristics}.
We are interested in investigating when a parameterized problem admits a kernelization of small size, particularly kernels of polynomial size with as small an upper bound as possible.
For other definitions, notations, and details on parameterized complexity and kernelization, 
please refer to \cite{CFKLMPPS15,Niedermeier06,GuoN07,DowneyF13,FlumG06} and \cite{FLSZ19}.

\paragraph{Sets, Numbers, and Graph Theory:} We use $\nn$ to denote the set of natural numbers. For $r \in \nn$, we denote the set $\{1,\ldots,r\}$ by $[r]$.
A {\em boolean formula} is an expression of binary variables with logical operators conjunction ($\wedge$), disjunction ($\lor$), and negation $(\neg$). 
A {boolean formula} is in {\em conjunctive normal form} (or {\em CNF}) if it is a conjuntion of clauses.
A {\em literal} is a boolean variable either in its own form or in its complemented form.
Each clause is a disjunction of literal.
We usually use $G = (V, E)$ to denote an undirected graph, and $D = (V, A)$ to denote a directed graph. Let $V(G)$ denote the set of vertices of $G$, and $E(G)$ the set of edges of $G$. When $G$ is clear from context, let $n = |V(G)|$ and $m = |E(G)|$.
Informally, we use $n$ to denote the number of vertices and $m$ to denote the number of edges in the input graph.
Given a graph $G = (V, E)$ and $X \subseteq V(G)$, we use $G[X]$ to denote an {\em induced subgraph} of $G$; its vertex set is $X$, and its edge set consists of all the edges of $E(G)$ that have both endpoints in $X$. Let $G - X$ denote the subgraph of $G$ induced on the vertex set $V(G) \setminus X$.
An undirected graph is called a {\em clique} if every pair of vertices are adjacent to each other.
An {\em independent set} is a pairwise non-adjacent set of vertices. 
A {\em tournament} is a directed graph $T$ such that for every pair of vertices $u,v \in V(T)$, exactly one of $uv$ or $vu$ is a directed edge of $T$.
In an undirected graph $G = (V, E)$, a {\em path} is a set $P$ of distinct vertices (except possibly for the start vertex and the end vertex) such that the vertices of $P$ can be arranged in an order $v_1,\ldots,v_d$ such that for every $1 \leq i \leq d-1$, $v_i v_{i+1} \in E(G)$.
Informally, for all $1 \leq i \leq d-1$, $v_i \neq v_{i-1}$ for a path.
Similarly, in a directed graph $D = (V, A)$, a {\em directed path} can be defined in a similar way when for every $i \leq d-1$, $(v_i, v_{i+1}) \in A$.
A (directed) {\em cycle} is a (directed) path that starts and ends at the same vertex. A {\em connected component} of an undirected graph is a connected subgraph that is not part of any larger connected subgraph. For subsets $X,A,B \subseteq V(G)$, we say $X$ {\em separates} $A$ and $B$ when no component of $G - X$ contains vertices from both $A \setminus X$ and $B \setminus X$.
Given a vertex subset $X \subseteq V(G)$, we use $N_G(X)$ to denote the {\em open neighborhood} of $X$ and $N_G[X]$ to denote the {\em closed neighborhood} of $X$.
When graph $G$ is clear from the context, we omit the subscript.

A vertex $v$ of $G$ is a {\em cut vertex } if $G - \{v\}$ has more connected components than $G$. A
graph $G$ is {\em biconnected} if $|V(G) \geq 3$, and it has no cut vertex.
A graph $G$ is a {\em block graph} if each of its maximal biconnected components is a clique.
A {\em tree} is a connected undirected graph with no cycle (or equivalently, a connected undirected acyclic graph).
A {\em forest} is a graph, every connected component of which is a tree.
A {\em directed acyclic graph} is a directed graph with no directed cycles. A {\em topological ordering} for a directed acyclic graph is a linear ordering of vertices such that for every directed edge $uv$, vertex $u$ comes before $v$ in the ordering.
An undirected graph $G = (V, E)$ is said to be {\em bipartite} if $V(G) = A \uplus B$ such that for every $uv \in E(G)$, $u \in A$ if and only if $v \in B$.
We call $A$ and $B$ the bipartitions of $G$.
Another alternate characterization is that a bipartite graph has no odd cycle.
We use $G = (A \uplus B, E)$ to denote a bipartite graph such that $A$ and $B$ are its bipartitions and $E$ is the set of its edges.
A set $M \subseteq E(G)$ is said to be a {\em matching} of a graph if for every two edges $uv, xy \in M$, $\{u, v\} \cap \{x, y\} = \emptyset$.
A vertex $u \in V(G)$ is said to be {\em matched} by a matching $M$, if there is an edge of $M$ incident on $u$.
A matching $M$ of $G$ is said to {\em saturate} $A$ if $A \subseteq V(M)$.
A matching is said to be {\em perfect matching} if all vertices are matched by $M$.
A cycle $C$ in a graph is said to be an {\em induced cycle} if $G[C]$ has no edge other than the edges of the cycle.
A graph is said to be a {\em chordal graph} if it has no induced cycle of length at least four.
An {\em out-tree} is a directed graph where each vertex has in-degree at most 1, and the underlying (undirected) graph is a tree. An {\em out-forest} is a disjoint union of out-trees.
A graph is said to be {\em $k$-colorable} if there is a function $\lambda: V(G) \rightarrow [k]$ for some $k \in \nn$ such that for every $uv \in E(G)$, $\lambda(u) \neq \lambda(v)$.
Such a function $\lambda$ is called a {\em proper coloring} of $G$.
It is well-known that a bipartite graph is $2$-colorable.
A graph $G$ is said to be {\em factor-critical} if for every $u \in V(G)$, the graph $G - u$ has a perfect matching.
For other terminologies of graph theory, we use standard graph-theoretic notations by Diestel \cite{Diestel2012}.

We use the following well-known proposition 
 several times in the paper.

\begin{proposition}[Hall's Theorem \cite{Hall1935}]
\label{prop:hall-theorem}
Let $G = (V = A \uplus B, E)$ be a bipartite graph.
Then, $G$ has a matching saturating $A$ if and only if for every $X \subseteq A$, $|N(X)| \geq |X|$.
\end{proposition}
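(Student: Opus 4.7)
The plan is to prove the two directions separately. The forward direction is essentially immediate: if $M$ is a matching saturating $A$, then distinct vertices of any $X \subseteq A$ are matched to distinct vertices in $B$, all of which lie in $N(X)$, so $|N(X)| \geq |X|$.

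For the backward (interesting) direction, I would proceed by induction on $|A|$. The base case $|A|=1$ follows because the Hall condition applied to $X=A$ gives at least one neighbor, and that single edge is a saturating matching. For the inductive step, I would split into two cases according to whether Hall's condition is ever \emph{tight} on a proper nonempty subset.

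\textbf{Case 1:} For every nonempty proper $X \subsetneq A$, one has the strict inequality $|N(X)| \geq |X|+1$. Pick any $a \in A$, any neighbor $b \in N(a)$, and delete both to form $G' = G - \{a,b\}$ with bipartition $(A \setminus \{a\}, B \setminus \{b\})$. Removing $b$ reduces each neighborhood by at most one, so the strict inequality for $G$ degrades to $|N_{G'}(X)| \geq |X|$ for every $X \subseteq A \setminus \{a\}$. Apply the induction hypothesis to get a matching saturating $A \setminus \{a\}$ and add the edge $ab$.

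\textbf{Case 2:} Some nonempty proper $X \subsetneq A$ satisfies $|N(X)| = |X|$. Apply induction to the bipartite subgraph on $X \cup N(X)$ (Hall's condition there is inherited from $G$) to obtain a matching $M_1$ saturating $X$. Then consider $G_2$, the subgraph induced by $(A \setminus X) \cup (B \setminus N(X))$. For any $Y \subseteq A \setminus X$, applying Hall's condition in $G$ to $X \cup Y$ gives $|N_G(X) \cup N_G(Y)| \geq |X|+|Y|$, and since $|N_G(X)| = |X|$, the neighbors of $Y$ outside $N(X)$ — which are exactly $N_{G_2}(Y)$ — must number at least $|Y|$. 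Induction yields a matching $M_2$ saturating $A \setminus X$, and $M_1 \cup M_2$ saturates $A$.

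The main obstacle is the bookkeeping in Case 2: one must verify both that Hall's condition transfers to the contracted subgraph $G_2$ and that the two matchings $M_1, M_2$ combine into a legitimate matching, which works because $M_1$ is supported on $X \cup N(X)$ and $M_2$ is supported on the complementary vertex set. Everything else is direct counting.
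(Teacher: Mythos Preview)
Your proof is correct and is the classical inductive argument for Hall's Theorem; the forward direction and both cases of the backward direction are handled cleanly, and the bookkeeping in Case~2 is accurate. Note, however, that the paper does not actually supply its own proof of this proposition: it is stated as a well-known result with a citation to Hall's original paper, so there is nothing to compare against beyond observing that your argument is the standard textbook one.
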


Given a bipartite graph $G = (A \uplus B, E)$, a set $X \subseteq A$ is said to be a {\em hall-set} if $|N(X)| < |X|$. 
A hall-set is said to be {\em minimal} if none of its proper subsets is a hall-set.
Given a bipartite graph $G = (A \uplus B, E)$, a minimal hall set can be computed in polynomial-time using Proposition \ref{prop:hall-theorem}.

Before we state the following proposition, we need a few notations.
For a graph $G = (V, E)$, suppose that $S \subseteq V(G)$ and a set $\CC$ of vertex-disjoint subgraphs of $G - S$, let $\NN_G(S, \CC) = \{H \in \CC \mid E(S, V(H)) \neq \emptyset\}$. 

\begin{proposition}[Edmond's Gallai Structure Theorem \cite{edmonds_1965,LovaszP2000}]
\label{prop:edmond-gallai}
For every graph $G = (V, E)$, there are disjoint subsets $X, Y, Z \subseteq V(G)$ such that 
\begin{enumerate}[(i)]
	\item $X \cup Y \cup Z = V(G)$,
	\item $Y = N_G(X)$,
	\item every connected component of $G[X]$ is factor-critical,
	\item $G[Z]$ has a perfect matching, and
	\item for every nonempty subset $Y' \subseteq Y$, it holds that $|\NN_G(Y', \CC)| \geq |Y'|$.
\end{enumerate}
Furthermore, such $X, Y, Z$ can be computed in polynomial-time.
\end{proposition}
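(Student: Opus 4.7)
The plan is to follow the classical approach via a maximum matching. First I would compute a maximum matching $M$ in $G$ using Edmonds' blossom algorithm, which runs in polynomial time. Then I would define the three sets canonically: let $X$ be the set of all vertices $v \in V(G)$ for which there exists \emph{some} maximum matching of $G$ that does not cover $v$ (the ``inessential'' or $D$-vertices in classical notation); let $Y = N_G(X) \setminus X$; and let $Z = V(G) \setminus (X \cup Y)$. Conditions (i) and (ii) are then immediate from the definitions, so all the work lies in verifying (iii), (iv), (v), and in showing that $X$ can actually be identified in polynomial time (not just existentially).

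For (iii), I would argue that every connected component $C$ of $G[X]$ is factor-critical. The key sub-claims are: (a) no edge of $M$ joins two vertices of $X$ (else an augmenting-path argument would move an $M$-exposed vertex to create a larger matching), so $M$ matches every $M$-covered vertex of $X$ into $Y$; (b) for each component $C$ of $G[X]$, exactly one vertex of $C$ is left exposed by $M$ (the rest are matched into $Y$), and any chosen vertex of $C$ can play that role via alternating-path rerouting. Standard blossom-shrinking/alternating-tree reasoning then shows that removing any $v \in C$ leaves $G[C - v]$ with a perfect matching, giving factor-criticality. For (iv), once one knows that $M$ matches $X$-vertices only to $Y$ and leaves one exposed per component of $G[X]$, a counting argument shows that all vertices of $Y$ are matched by $M$ to $X$, and hence $M$ restricted to $Z$ is a perfect matching of $G[Z]$.

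For (v), let $\mathcal{C}$ be the set of connected components of $G[X]$. Suppose some nonempty $Y' \subseteq Y$ violates the condition, i.e.\ $|\mathcal{N}_G(Y', \mathcal{C})| < |Y'|$. Since $M$ matches each $y \in Y'$ to some vertex of $X$ lying in a component of $\mathcal{N}_G(Y', \mathcal{C})$, by pigeonhole two vertices $y_1, y_2 \in Y'$ are matched into the same component $C \in \mathcal{C}$; using the factor-criticality from (iii), one can reroute an alternating path inside $C$ to unmatch $y_1$ (or $y_2$) and produce a maximum matching exposing a vertex of $Y$, contradicting the definition $Y \cap X = \emptyset$. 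This is the step I expect to be the main obstacle, because making the rerouting rigorous requires careful bookkeeping of the blossom structure and of how the exposed vertex propagates through the component.

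Finally, for the polynomial-time bound, I would note that $X$ itself has an algorithmic characterization: a vertex $v$ lies in $X$ iff $G - v$ has a maximum matching of the same size as $G$, which can be checked by running the blossom algorithm $|V(G)|$ times. Alternatively, one run of Edmonds' algorithm already produces the Gallai--Edmonds decomposition as a by-product of its alternating forest. Either way, $X$, $Y$, and $Z$ are computable in polynomial time, completing the proof.
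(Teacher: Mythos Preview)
The paper does not prove this proposition; it is stated in the preliminaries with citations to Edmonds (1965) and Lov\'asz--Plummer and used as a black box. So there is no in-paper proof to compare your proposal against.

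That said, your sketch contains a genuine error. In part~(iii), claim~(a) --- that no edge of $M$ joins two vertices of $X$ --- is false. Take $G$ to be a triangle: every vertex is inessential (so $X = V(G)$, $Y = Z = \emptyset$), and any maximum matching consists of a single edge joining two vertices of $X$. More generally, for any component $C$ of $G[X]$ with $|C| \geq 3$, the restriction of $M$ to $C$ is a near-perfect matching \emph{of} $C$, so it contains $(|C|-1)/2$ edges entirely inside $X$. The correct picture is that $M$ matches each vertex of $Y$ into a distinct component of $G[X]$, and within each such component the remaining vertices are matched to one another by $M$, leaving exactly one exposed. Your subsequent claims in (b), (iv), and (v) all lean on the mistaken claim~(a) (e.g., ``the rest are matched into $Y$''), so the argument as written does not go through. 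The overall strategy --- define $X$ as the set of inessential vertices, $Y = N_G(X)\setminus X$, $Z$ the rest, and read off the structure from a maximum matching --- is indeed the classical one and can be completed, but you first need to fix the description of how $M$ sits on $X$.
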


\paragraph{Graph Parameters:} We introduce different graph parameters that we will use in this paper.
A subset $S \subseteq V(G)$ is a {\em vertex cover} of $G$ if for every edge $uv \in E(G)$, $\{u, v\} \cap S \neq \emptyset$.
It follows that for an undirected graph $G$, if $S$ is a vertex cover, then $V(G) \setminus S$ is an independent set (and vice versa).
A subset $S \subseteq V(G)$ is called a {\em cluster vertex deletion set} of $G$ if every connected component of $G - S$ is a clique.
A subset $S \subseteq V(G)$ is a {\em feedback vertex set} of $G$ if $G - S$ has no cycle.

The following graph parameter requires a more non-trivial definition.


\begin{definition}[Pathwidth \cite{RobertsonS84,RobertsonS86,RobertsonS90a}]
\label{defn:pathwidth}	
A {\em path decomposition} of a graph $G$ is a pair $(T, \chi)$ in which $T = (V_T, E_T)$ is a path and $\chi = \{\chi_i \mid i \in V_T\}$ is a family of subsets of $V(G)$ called {\em bags}, such that
\begin{enumerate}[(i)]
	\item\label{vertex-belong-tw} $\cup_{i \in V_T} \chi_i = V(G)$.
	\item\label{edge-belong-tw} for each edge $uv \in E(G)$, there is $i \in V_T$ such that $u, v \in \chi_i$.
	\item\label{connectivity-tw} for each vertex $u \in V(G)$, the set of nodes $\{i \in V_T \mid u \in \chi_i\}$ induces a connected subgraph of $T$.
\end{enumerate}
\end{definition}

The value $\max_{i \in V_T} \{|\chi_i| - 1\}$ is called the {\em width} of the path decomposition $(T, \chi)$.
The {\em pathwidth} of a graph is the minimum width taken over all possible tree decompositions of $G$.
It is not very hard to understand that the pathwidth of a path (or a linear forest) is 1.
Similarly, the pathwidth of a cycle is 2, and the pathwidth of a caterpillar is also 1.
Also, the pathwidth of a clique with $n$ vertices is $n-1$.
On an informal note, the pathwidth of a graph is a measure on how close a graph is to a path.



We begin the technical part of the survey with a structure and a lemma that could be considered a precursor to expansion lemma.

\section{Crown Decomposition and Applications}
\label{sec:crown}

In this section, we define a related technique called ``crown decomposition'' and its application to some problems.
In Sections \ref{sec:vc-crown}, \ref{sec:(n-k)coloring}, \ref{sec:max-sat}, \ref{sec:list-color}, and \ref{sec:long-cycle}, we discuss how crown decomposition techniques are used to get kernels for {\VC}, {\sc $(n-k)$-Coloring}, {\sc Maximum Satisfiability}, {\sc $(n-k)$-List Coloring} and {\sc Longest Cycle} problems respectively.
After that, in Section \ref{sec:gen-crown}, we discuss a generalization of the crown decomposition technique recently introduced by Chen et al. \cite{ChenFSWY19}.

\defparproblem{{\VC} (VC)}{An undirected graph $G = (V, E)$ and an integer $k$.}{$k$.}{Is there a set $S \subseteq V(G)$ of at most $k$ vertices such that for every $uv \in E(G)$, $u \in S$ or $v \in S$?}

In the context of the {\sc Vertex Cover} problem, a well-known reduction rule to get rid of pendant (degree one) vertices is to pick their neighbours (and delete them and their neighbours) in the solution~\cite{CFKLMPPS15}.

A crown introduced by Fellows \cite{fellows2003blow} generalizes the structure underlying pendant vertices.

\begin{definition}[Crown Decomposition]
\label{defn:crown-decomposition}
Given a graph $G = (V, E)$, a {\em crown decomposition} of $G$ is a partition of $V(G) = C \uplus H \uplus R$ such that
\begin{itemize}
	\item $C \neq \emptyset$,
	\item $C$ is an independent set,
	\item there is no edge between a vertex of $C$ and a vertex of $R$, i.e. $H$ separates $C$ from $R$, and
	\item there is a matching of size $|H|$ among the edges $E \cap (H \times C)$.
\end{itemize}
\end{definition}

The set $C$ can be seen as a crown put on the head $H$ of the remaining part (body) $R$. If $G$ has a pendant vertex $x$ with its (unique) neighbour $y$, then $C=\{x\}$, $H =\{y\}$ and $R= V\setminus \{x, y\}$ constitute a crown decomposition.

\begin{lemma}[Crown Lemma-1,~\cite{fellows2003blow,chor2004linear}]
\label{lemma:crown-lemma1}
Let $G$ be a graph without isolated vertices with at least $3k + 1$ vertices. Then there is a polynomial-time algorithm that either
\begin{itemize}
	\item finds a matching of size $k + 1$ in $G$, or
	\item finds a crown decomposition of $G$.
\end{itemize}
\end{lemma}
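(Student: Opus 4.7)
My plan is to compute a maximum matching of $G$ and, if it has fewer than $k+1$ edges, to read off a crown from the structure around the unsaturated vertices.

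First, I would compute a maximum matching $M$ of $G$ in polynomial time (e.g., by Edmonds' blossom algorithm). If $|M|\geq k+1$, output $M$ and stop. Otherwise $|V(M)|\leq 2k$, so the set $I:=V(G)\setminus V(M)$ of $M$-unsaturated vertices satisfies $|I|\geq 3k+1-2k=k+1$. Since $G$ has no isolated vertex and $M$ is maximum, $I$ must be an independent set, for any edge inside $I$ would augment $M$.

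Next, I would examine the bipartite subgraph $B$ on $I$ and $H_0:=N_G(I)\subseteq V(M)$ consisting of the $G$-edges between them. Compute a maximum matching $M_B$ of $B$ and, via König's theorem, a minimum vertex cover $Q=Q_I\uplus Q_H$ of $B$, with $Q_I\subseteq I$ and $Q_H\subseteq H_0$. Because $M_B$ is also a matching in $G$, $|Q|=|M_B|\leq |M|\leq k$, and hence $|Q_I|\leq k<|I|$, so the set $C:=I\setminus Q_I$ is nonempty. I claim that $(C,H,R)$ with $H:=Q_H$ and $R:=V(G)\setminus(C\cup H)$ is the required crown decomposition. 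Indeed, $C$ is a nonempty independent subset of $I$; the vertex-cover property of $Q$ forces $N_G(C)\subseteq Q_H=H$, since any edge from $C$ to $H_0\setminus Q_H$ would have both endpoints outside $Q$, contradicting that $Q$ covers $B$. A standard counting consequence of König's theorem — each edge of $M_B$ has exactly one endpoint in $Q$, and every vertex of $Q$ is $M_B$-saturated — then yields a matching of size $|H|$ in $E\cap(H\times C)$, because every $M_B$-edge meeting $Q_H$ has its other endpoint in $I\setminus Q_I=C$.

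The main obstacle is justifying this last step, that $M_B$ saturates $Q_H$ by edges landing in $C$ rather than in $Q_I$. I would handle it by the explicit counting argument: each of the $|M_B|$ edges has at least one endpoint in $Q$, contributing at least $|M_B|=|Q|$ incidences to $Q$; but each $Q$-vertex lies in at most one matching edge, forcing equality, so every $Q$-vertex is saturated and every $M_B$-edge has exactly one endpoint in $Q$. An alternative route is to apply Hall's theorem (Proposition~\ref{prop:hall-theorem}) directly from the $H_0$-side of $B$ and, if a minimal hall-set arises, derive an augmenting path in $G$ contradicting the maximality of $M$; however, routing through König bypasses this case analysis cleanly and produces the crown in one shot.
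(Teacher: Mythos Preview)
The paper does not supply its own proof of Lemma~\ref{lemma:crown-lemma1}; it simply cites \cite{fellows2003blow,chor2004linear} and uses the statement as a black box. Your argument is correct and is essentially the standard proof one finds in the cited literature (see also \cite{FLSZ19}): take a maximum matching, let $I$ be the unsaturated set (independent by maximality and of size $>k$), and extract a crown from the K\"onig structure of the bipartite graph between $I$ and $N_G(I)$. The counting step you single out as the ``main obstacle'' is exactly the routine justification that in a K\"onig pair $(M_B,Q)$ with $|M_B|=|Q|$, every $Q$-vertex is $M_B$-saturated and every $M_B$-edge meets $Q$ in precisely one endpoint, so the edges of $M_B$ incident to $Q_H$ land in $I\setminus Q_I=C$ as required.
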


Fellows \cite{fellows2003blow} apply Lemma~\ref{lemma:crown-lemma1} for 
kernelization in the following way.
First, they observe that if there is a matching of size $k+1$, then the instance is a trivial yes or no instance for some problems. Moreover, in the other case, the head of the crown can be argued to be in some optimum solution (as it hits a large number of obstructions) and hence can be deleted, resulting in a reduction rule.

The following variation of Lemma \ref{lemma:crown-lemma1} can be useful sometimes. This lemma also serves as a precursor to the more general expansion lemma described in the next section.

\begin{lemma}[Crown Lemma-2,~\cite{hopcroft1973n,chor2004linear,FLSZ19}]
\label{lemma:crown-lemma2}
Let $G = (V = A \uplus B, E)$ be a bipartite graph with no isolated vertices in $B$, and $|B| \geq |A|$. 
Then there is a polynomial time algorithm that either
\begin{itemize}
	\item finds a matching saturating $A$ or 
	\item finds a crown decomposition $(C,H,R)$ of $G$, with $C \subseteq A$ and $H \subseteq B$.
\end{itemize}
\end{lemma}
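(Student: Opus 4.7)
The plan is to reduce the problem to a maximum bipartite matching computation and then read off the crown from the alternating reachability structure relative to the unmatched vertices of $A$. First, I would compute a maximum matching $M$ of $G$ in polynomial time by a standard algorithm such as Hopcroft--Karp. If $M$ saturates $A$, we output the first alternative and halt. Otherwise, let $U \subseteq A$ be the non-empty set of vertices of $A$ that are left unmatched by $M$.

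Next, I would perform an alternating BFS rooted at $U$: from any already-discovered vertex of $A$ we traverse every incident edge of $G$ into $B$, while from any already-discovered vertex of $B$ we traverse only its matching edge (if one exists) back into $A$. Let $H \subseteq B$ and $C \subseteq A$ be, respectively, the sets of $B$- and $A$-vertices visited during this exploration, and set $R := V(G) \setminus (C \cup H)$. Then $U \subseteq C$, so $C$ is non-empty, and $C$ is independent because $A$ is an independent set in a bipartite graph. Moreover, by the very construction of the BFS, every edge leaving a vertex of $C$ goes to a vertex of $H$, which yields $N(C) \subseteq H$ and hence the desired separation of $C$ from $R$ by $H$.

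The main obstacle, and the step where the maximality of $M$ is essential, is to show that every $b \in H$ is matched by $M$ to some vertex of $C$, so that the restriction of $M$ to the edges between $H$ and $C$ is a matching of size exactly $|H|$. If some $b \in H$ were unmatched, then the alternating path traced by the BFS from $U$ to $b$ would begin at an unmatched vertex of $A$, alternate between non-matching and matching edges, and terminate at an unmatched vertex of $B$; hence it would be an $M$-augmenting path, contradicting the maximality of $M$. Therefore every $b \in H$ is matched by $M$, and since $b$ could only have been discovered by the BFS via its own matching edge, its mate necessarily lies in $C$. Combining these observations shows that $(C, H, R)$ satisfies all four conditions of a crown decomposition with $C \subseteq A$ and $H \subseteq B$, and the whole procedure clearly runs in polynomial time.
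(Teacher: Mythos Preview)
Your approach via alternating reachability from the unmatched vertices of $A$ is correct and standard, and it differs from the paper's sketch, which instead takes $C$ to be a \emph{minimal} Hall set in $A$ (so that $|N(C)| = |C|-1$ and, by minimality, $G[C \cup N(C)]$ has a matching saturating $H = N(C)$). The two arguments are two faces of the K\H{o}nig--Egerv\'ary structure: yours is more directly algorithmic and reads the crown straight off the maximum matching, while the paper's phrasing makes the connection to Hall's theorem explicit; both yield the crown in polynomial time, and neither actually uses the hypotheses $|B| \ge |A|$ or ``no isolated vertices in $B$'' for this direction of the dichotomy.

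There is one slip in your justification of why the $M$-mate of each $b \in H$ lies in $C$. You write that ``$b$ could only have been discovered by the BFS via its own matching edge,'' but this is false: in your BFS, vertices of $B$ are reached from $A$ along \emph{arbitrary} incident edges, so $b$ may well first be discovered via a non-matching edge. The correct argument runs in the opposite direction: once $b$ is discovered, the BFS \emph{leaves} $b$ along its matching edge (this is the only edge you traverse out of a $B$-vertex), so the mate of $b$ is subsequently discovered and hence lies in $C$. With this one-line fix the proof is complete.
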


If there is no matching saturating $A$, then by Hall's Theorem (Proposition \ref{prop:hall-theorem}), there exists a minimal hall-set $C \subseteq A$ such that $|N(C)| < |C|$.
Such a minimal hall-set gives rise to a crown decomposition; see \cite{FLSZ19} for proof.

We exemplify these lemmas using some examples.

\subsection{Vertex Cover}
\label{sec:vc-crown}

We now describe a summary of a kernel of $3k$ vertices for {\VC} provided by Chor et al. \cite{chor2004linear} where the earliest application of  Lemma~\ref{lemma:crown-lemma1} can be seen.
The isolated vertices of the input graph $G$ can be safely removed. Then, they invoke the algorithm in Lemma \ref{lemma:crown-lemma1}, which either gives a matching of size $k+1$ in $G$ or a crown decomposition of $G$ if $G$ has more than $3k$ vertices. In the former case, it can be concluded that the input instance is a no-instance as any vertex cover of $G$ needs at least $k+1$ vertices to cover the edges of the matching. 

Hence they assume that the algorithm returns a crown decomposition of $G$ with crown $C$ and head $H$. Let $M$ be the matching of $H$ into $C$ that saturates $H$. Any vertex cover that covers the edges in $G[H \cup C]$ has to contain at least $|M| = |H|$ many vertices. Since $M$ saturates $H$ and $C$ is an independent set, the set $H$ covers all the edges in $G[H \cup C]$. Hence we can safely reduce the problem instance to $(G \setminus (H \cup C), k - |H|)$. Hence we conclude that we can reduce the {\sc Vertex Cover} instance as long as $|V(G) | > 3k$. Hence {\sc Vertex Cover} has a kernel of $3k$ vertices. This bound has been improved to a kernel with $2k - \log k$ vertices by Lampis et al.~\cite{lampis2011kernel} using linear programming techniques.
See Li and Zhu \cite{LiZhu2018} for another $2k$-vertex kernel for {\VC}.

\subsection{$(n-k)$-Coloring}
\label{sec:(n-k)coloring}
Given an undirected graph, determining whether its vertices can be properly colored with $k$ colors is NP-hard even for $k=3$. So a kernel or an FPT algorithm is not possible unless P=NP. 
However, a related problem is whether the graph $G$ on $n$ vertices can be colored properly with $n-k$ colors, which is the same as whether $k$ colors can be saved.
The formal definition is the following.

\defparproblem{{\sc $(n-k)$-Coloring}}{An undirected graph $G = (V, E)$ and an integer $k$.}{$k$.}{Is there a coloring $\Xi :V \rightarrow \{1,2,\ldots |V|-k\}$ such that $\Xi (x) \neq \Xi (y)$ whenever $xy \in E$?}

This problem has an FPT algorithm and a $3k$ vertex kernel using Lemma~\ref{lemma:crown-lemma1} by Chor et al. \cite{chor2004linear} (also see \cite{FLSZ19} for more details).
The idea is to apply Lemma \ref{lemma:crown-lemma2} to the complement graph. If it contains a matching of size $k$, then those $k$ pairs of non-edges in $G$ can help us save $k$ colors. If not, the crown $C$ is a clique in $G$ needing $|C|$ colors, and all vertices in $C$ are adjacent to every vertex of $V\setminus (H \cup C)$. Furthermore, the matching saturating $H$ helps to color vertices in $H$ using colors used in $C$. Thus we can remove $H \cup C$ and recurse on the remaining vertices. This process stops if we discover that we have already used more than $(n-k)$ colors, so the input is a no-instance or that there are only $3k$ vertices.
Recently, Li et al. \cite{LiDYR21} have improved this result by giving a $(2+\varepsilon)k$ vertex kernel for this problem.
They have provided a new structure called {`fractal critical crown'} of a graph, use Proposition \ref{prop:edmond-gallai} and the notion of factor-critical crown (see \ref{sec:gen-crown} for definition) to get their result.

\subsection{Maximum Satisfiability}
\label{sec:max-sat}

\defparproblem{{\sc Maximum Satisfiability}}{A Conjunctive Normal Form (CNF) formula $\phi$ with $n$ variables and $m$ clauses, and a nonnegative integer $k$.}{$k$.}{Does $\phi$ have a truth assignment satisfying at least $k$ clauses?}

It is folklore that such a CNF formula has a truth assignment satisfying at least $m/2$ clauses, which implies that we can assume that $m \leq 2k$, as otherwise, the instance is a YES-instance.
It is also easy to see (by a greedy algorithm that appropriately sets an unassigned variable in each clause in sequence) that the formula $\phi$ is satisfiable if $n \geq m$.

Thus it can be assumed that $n < m < 2k$.
Lokshtanov \cite{lokshtanov2009new} showed (as below) using Lemma~\ref{lemma:crown-lemma2} that $n$ can be made to be less than $k$. The idea is to construct a bipartite graph with bipartition $A$ and $B$ where $A$ corresponds to the variables in the formula, and $B$ corresponds to the clauses in the formula. There is an edge between $x \in A$ and a $c \in B$ if $x$ appears in the clause $c$.

If $n \geq k$ then from Lemma~\ref{lemma:crown-lemma2}, if we have a matching saturating $A$, then the clauses in the matching can be satisfied by setting the variables in the matching appropriately. Thus we will be able to satisfy at least $k$ clauses. Otherwise, it can be argued that the crown $(C, H, R)$ can be used to satisfy the clauses represented by $H$ by appropriately setting the variables represented by $C$, and removing those clauses and variables.
See \cite{FLSZ19} for more details.

\subsection{$(n-k)$-List Coloring} 
\label{sec:list-color}

Banik et al.~\cite{banik2020fixed} use 
Lemma~\ref{lemma:crown-lemma2}
to obtain an {FPT} algorithm for {\sc $(n-k)$-{\ListColor}}, a generalization of the problem in Section~\ref{sec:(n-k)coloring} where they gave a reduction rule that bound the number of colors over all the lists to $n$.
The problem is defined formally below.

\defparproblem{$(n-k)$-{\ListColor}}{A graph $G = (V, E)$, an integer $k$ and for every $u \in V(G)$, a set $L(u)$ of exactly $(n-k)$ colors.}{$k$.}{Is there a proper coloring of $G$ such that for all $u \in V(G)$ the color assigned to $u$ is in $L(u)$?}

Here they create an auxiliary bipartite graph with vertices $V(G)$ on one side and $\mathcal{C}$, the colors in every vertex lists on the other side, adding edges between a vertex and a color if its list contains it. After that, they invoke Lemma~\ref{lemma:crown-lemma2}. If there is a matching saturating $\mathcal{C}$, the number of colors is bounded by $|V(G)| = n$. Otherwise, we have a crown decomposition $(C,H,R)$ with $C \subseteq \mathcal{C}$ and $H \subseteq V(G)$. Since there is a matching $M$ from $H$ to $C$ saturating $H$ and $N(C) =H$, we can safely color the vertices of $H$ using the colors $C$ according to the matching $M$ and reduce the instance. Thus, the number of colors is reduced to $n$. Further ideas are used to obtain an {FPT} algorithm.

\subsection{Longest Cycle parameterized by vertex cover size}
\label{sec:long-cycle}

\defparproblem{{\sc Longest Cycle (vc)}}{A graph $G$, integers $k$ and $\ell$ and a vertex cover $S$ of $G$ such that $|S| = k$.}{$k$.}{Does $G$ contain a cycle on $\ell$ vertices?}

Bodlaender et al.~\cite{bodlaender2013kernel} showed that {\sc Longest Cycle (vc)} admits a kernel with $\OO(k^2)$ vertices. Here, they construct an auxiliary bipartite graph with pairs of vertices on one side and $I = V(G) \setminus S$ on the other side with edges added if both vertices in a pair are adjacent to a vertex in $I$. We now apply Lemma \ref{lemma:crown-lemma2}.
If there is a matching saturating $I$, the overall number of vertices is $\OO(k^2)$. Otherwise, we have a crown decomposition $(C,H,R)$ with $C \subseteq I$ and $H \subseteq S$. Let $M$ be a matching from $H$ to $C$ saturating $H$. We claim that for any vertex $v \in C$ not saturated by $M$, we can reduce the instance to $(G - v, k, \ell, S)$. Suppose a solution cycle $C'$ in $G$ contains $v$. Since $I$ is an independent set, the two neighbors of $v$ in $C$ are $s_1, s_2 \in S$. Then note that there is another vertex $v' \in I$ adjacent to both $s_1$ and $s_2$ as there is an edge in the matching $M$ corresponding to $v'$ and the pair $(s_1, s_2) \in H$. This gives us a cycle $C''$ by replacing $v$ with $v'$ in $G-v$.

%
%

\subsection{Generic Crown Decomposition and its application to packing and covering problems}
\label{sec:gen-crown}

Chen et al.~\cite{ChenFSWY19} introduced a general notion of crowns and used it to obtain an $\OO(k)$ vertex kernel for {\sc $T_r$-Packing} problem (for each fixed integer $r$) where $T_r$ is a tree with $r$ edges.

It is folklore that many covering problems in graphs can be formulated as an instance of {\sc $d$-Hitting Set} defined as follows.

\defparproblem{{\sc $d$-Hitting Set}}{A family $\FF$ of sets over a universe $U$, where each set in $\FF$ has size at most $d$, and an integer $k$.}{$k$.}{Does there exist a subset $X \subseteq U, |X| \leq k$ such that $X$ contains at least one element from each set in $\FF$?}

Similarly, many packing problems in graphs can be formulated as an instance of {\sc $d$-Set Packing}  defined as follows.

\defparproblem{{\sc $d$-Set Packing}}{A family $\FF$ of sets over a universe $U$, where each set in $\FF$ has size at most $d$, and an integer $k$.}{$k$.}{Does there exist a subset $\FF' \subseteq \FF, |\FF'| =k$ such that every element in $U$ is contained in at most one set in $\FF'$?}

It is folklore that the {\sc $d$-Hitting Set} and {\sc $d$-Set Packing} problems can be formulated as a dominating set problem and their duals in bipartite graphs which are defined as follows.

\defparproblem{{\sc $d$-Red/Blue Dominating Set}}{A bipartite graph $G = (R \uplus B, E)$, where $\forall x \in B: deg(x) = d$, and an integer $k$.}{$k$.}{Does there exist a subset $X \subseteq R, |X| \leq k$ such that $X$ such that $N(X) = B$?}

\defparproblem{{\sc $d$-Red/Blue Distance-$3$ Packing}}{A bipartite graph $G = (R \uplus B, E)$, where $\forall x \in B: deg(x) = d$, and an integer $k$.}{$k$.}{Does there exist a subset $X \subseteq B, |X| \geq k$ such that each $x \in R$ is a neighbor of at most one $x \in X$?}

We formally state the definition of a {\em generic crown decomposition} below.
\begin{definition}[\cite{ChenFSWY19}]
\label{defn:gen-crown-bipartite} 
Let $G = (V = A \uplus B, E)$ be a bipartite graph such that for all $x \in B$, $deg_G(x) = d$.
Then, a {\em generic crown decomposition} $(C, H, X)$ of $G$ is given by a {\em head} $H \subseteq A$ and a {\em crown} $C \subseteq (A \uplus B) \setminus (H \cup C)$ satisfying the following properties
\begin{itemize}
	\item $H$ is a separator in $G$, one collection of connected components being $C$.
	\item $N(H) \supseteq C \uplus B$. 
	\item There is an injective mapping $M: H \rightarrow C \cap B$ such that 
\begin{itemize}
\item  for all $x \in H$, $xM(x) \in E(G)$, and 
\item each $v \in A$ is a neighbor of at most one $x \in M(H)$.
\end{itemize}	
   \item $X = (A \uplus B) \setminus (H \cup C)$ are the remaining vertices.
\end{itemize} 
\end{definition}

The authors proved that $(G,k)$ is a YES instance for {\sc $d$-Red/Blue Dominating Set} (or {\sc $d$-Red/Blue Distance-$3$ Packing} resp.) if and only if $(G - (H \cup C),k)$ is YES instance where $(C, H, X)$ is a {\em generic crown decomposition} of $G$.

The authors devised a framework for obtaining crown rules for covering and packing problems in graphs. They define a notion of crown decomposition specific to the problem. They then reduce the problem to {\sc $d$-Red/Blue Dominating Set} (or {\sc $d$-Red/Blue Distance-$3$ Packing} resp.) where the problem-specific crown decomposition corresponds to a generic crown decomposition. From the safety of the reduction corresponding to the generic crown decomposition, the safety of the problem-specific crown decomposition follows.

The authors demonstrate this framework in the context of the following covering and packing problems.

\defparproblem{{\sc $r$-Edge-Tree-Covering}}{An undirected graph $G$ and an integer $k$.}{$k$}{Is there a set $S$ of at most $k$ vertices such that $G - S$ has no tree with $r$ edges as a subgraph?}

The dual version of this problem is defined as follows.

\defparproblem{{\sc $r$-Edge-Tree-Packing}}{An undirected graph $G$ and an integer $k$.}{$k$}{Is there a set of at least $k$ pairwise disjoint collection of trees with $r$ edges each in $G$?}

They \cite{ChenFSWY19} define {\em $r$-tree crown decomposition} as follows.

\begin{definition}
\label{defn:T-r-crown}[\cite{ChenFSWY19}]
An {\em $r$-tree crown decomposition} of a graph $G = (V, E)$ is a partitioning of $V(G) = C \uplus H \uplus X$ satisfying the following properties.
\begin{itemize}
	\item $H$ (the head) separates $C$ from $X$ in $G$.
	\item $G[C]$ ($C$ is the crown) induces no subgraph that is a tree with $r$ edges.
	\item There are $r$ injective mappings $\pi_1,\ldots,\pi_r$ from $H$ to $C$ called {\em witness functions} such that for all $i \neq j$, $\pi_i(H) \cap \pi_j(H) = \emptyset$ and for each $v \in H$, the vertex set $\{p\} \cup \{\pi_i(v) \mid 1 \leq i \leq r\}$ forms a tree with $r$ edges in $G$.
\end{itemize}
\end{definition}

Given an instance $(G,k)$ of  {\sc $r$-Edge-Tree-Packing}, the authors construct a bipartite graph $G' = (R \uplus B, E')$ as follows. Let $R=V$ and $B$ be the set of all trees with $r$ edges present in $G$. Add an edge $uv \in E', u \in R, v \in B$ if and only if the vertex $u$ is part of the tree corresponding to $v$. 

Let us focus on an $r$-tree crown decomposition $(C,H,X)$ in $G$. The witness functions $\pi_1,\ldots,\pi_r$ from $H$ to $C$ correspond to a mapping $M$ from $H$ to some of the trees in $B$; in particular $v \in H$ is mapped to the tree $\{v, \pi_1(v),\ldots,\pi_r(v)\}$. From the definition of $\pi_i, i \in [r]$, one can conclude that $M$ is a matching and $M(H)$ is a packing. Thus $(C',H,X')$ is a generic crown decomposition in $G'$ where $C' = C \uplus M(H)$ and $X' = V(G') \setminus (H \cup C')$. Since $(G',k)$ is a YES-instance of {\sc $r$-Red/Blue Distance-$3$ Packing} if and only if $(G' - (H \cup C'),k-|H|)$ is a YES-instance of {\sc $r$-Red/Blue Distance-$3$ Packing}, one can conclude that $(G,k)$ is a YES-instance of {\sc $r$-Edge-Tree-Packing} if and only if $(G - (H \cup C),k-|H|)$ is a YES-instance of {\sc $r$-Edge-Tree-Packing}.

Using the above reduction, the authors provide a kernel with at most $(r^2 + 1)(r+1)k$ vertices for {\sc $r$-Edge-Tree-Packing}. The authors also devise a methodology to transfer a kernelization result from a packing problem to the corresponding covering problem with a similar size (See \cite{ChenFSWY19} for more details). Thus a kernel with at most $(r^2 + 1)(r+1)k$ vertices for {\sc $r$-Edge-Tree-Covering} also follows.


\subsection{Other examples}
\label{sec:crown-other-examples}

More examples of the application of crown decompositions and some variants include 
\begin{itemize}
\item a kernel with $\OO(k^3)$ vertices for {\sc Triangle Packing} (packing vertex disjoint triangles as subgraphs)~\cite{fellows2004finding},
\item an kernel with $\OO(k^2)$ vertices for {\sc $k$-Internal Spanning Tree} (finding a spanning tree with at least $k$ internal vertices)~\cite{prieto2005reducing},
\item a linear-sized kernel for several variants of {\VC}~\cite{chlebik2008crown},
\item a $(2d -1)k^{d-1} + k$ sized kernel for {\sc $d$-Hitting Set}~\cite{abu2010kernelization},
\item an $\OO(k^{d-1})$ sized kernel for {\sc $d$-Set Packing}~\cite{abu2010improved}, and
\item  a kernel with $7k$ vertices for {\sc $P_3$-Packing} (packing vertex disjoint $P_3$'s as subgraphs)~\cite{wang2010improved}. Later, Xiao and Khu \cite{XiaoK17} improved this result into a kernel with $5k$ vertices using crown decomposition with some critical observations.
\end{itemize} 

\section{(Basic) Expansion Lemma and Applications}
\label{sec:expansion-lemma-basic}

In this section, we provide the definition of `$q$-expansion' and expansion lemma that generalizes hall's theorem (Proposition \ref{prop:hall-theorem}).
In the following subsections, we give a summary of how expansion lemma has been used for several problems, e.g. {\sc $p$-Component Order Connectivity}, {\FVS}, {\CVD} and several other graph theoretic problems.

\begin{definition}[$q$-Expansion]
\label{defn:q-expansion}
Let $G = (V = A \uplus B, E)$ be a bipartite graph. 
For a positive integer $q$, a set of edges $M \subseteq E(G)$ is called a {\em $q$-expansion of $A$ into $B$} if
\begin{itemize}
	\item every vertex of $A$ is incident to exactly $q$ edges of $M$, and
	\item $M$ saturates exactly $q|A|$ vertices in $B$.
\end{itemize}
\end{definition}

For $q=1$, a $q$-expansion is just a matching. We now have expansion lemma, which was introduced by Thomass\'e~\cite{Thomasse2010}.

\begin{lemma}[{\qExpLem}~\cite{Thomasse2010}]
\label{lem:expansion-lemma}
Let $q \geq 1$ be a positive integer and $G = (V = A \uplus B, E)$ be a bipartite graph 
such that
\begin{itemize}
	\item $|B| \geq q|A|$, and
	\item there are no isolated vertices in $B$.
\end{itemize}
Then, there exists nonempty vertex sets $X \subseteq A$ and $Y \subseteq B$ such that
\begin{enumerate}
	\item\label{prop:expansion-property-1} there is a $q$-expansion of $X$ into $Y$, and
	\item\label{prop:expansion-property-2} $N_G(Y) \subseteq X$. 
\end{enumerate}
Furthermore, the sets $X$ and $Y$ can be found in polynomial time in the size of $G$ (see Figure \ref{fig:expansion-lemma-basic} for an illustration for $q = 2$).
\end{lemma}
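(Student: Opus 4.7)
The plan is to reduce the lemma to a standard matching statement via a duplication trick, and then either apply Hall's theorem directly or recurse. First I would construct an auxiliary bipartite graph $G' = (A' \uplus B, E')$ where $A' = A \times [q]$ consists of $q$ copies of each vertex of $A$, and $\bigl((a,i),b\bigr) \in E'$ precisely when $ab \in E(G)$. Under this construction, a $q$-expansion of a set $X \subseteq A$ into $Y \subseteq B$ in $G$ corresponds bijectively to a matching in $G'$ that saturates $X \times [q]$ and uses only vertices of $Y$ on the $B$-side. Since $|A'| = q|A| \leq |B|$ while $B$ still has no isolated vertices in $G'$, the hypothesis of Hall's theorem (Proposition~\ref{prop:hall-theorem}) is in place.

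Next I would do a case split on whether Hall's condition holds in $G'$. If it does, then there is a matching saturating all of $A'$; letting $Y$ be the matched vertices of $B$ and setting $X := A$ yields a $q$-expansion of $X$ into $Y$, and $N_G(Y) \subseteq A = X$ holds trivially. Otherwise Hall's theorem produces a nonempty Hall violator $S \subseteq A'$; by replacing $S$ with $\pi^{-1}(\pi(S))$, where $\pi : A' \to A$ is the projection, $S$ can only be enlarged and its neighborhood in $B$ is unchanged, so we may assume $S = T \times [q]$ for some nonempty $T \subseteq A$. This gives $|N_G(T)| < q|T|$.

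The crucial observation, and the one place where the no-isolated-vertex hypothesis really bites, is that necessarily $T \subsetneq A$: if $T$ equalled $A$ then $|N_G(A)| < q|A| \leq |B|$ would leave some vertex of $B$ with no neighbor in $A$, contradicting the hypothesis. Set $G^\star := G[(A \setminus T) \cup (B \setminus N_G(T))]$ and recurse (equivalently, induct on $|A|$) on $G^\star$. The hypothesis transfers to $G^\star$ because $|B \setminus N_G(T)| > |B| - q|T| \geq q(|A| - |T|) = q|A \setminus T|$, and any $b \in B \setminus N_G(T)$ must have a neighbor in $A \setminus T$ since its neighbors in $G$ are nonempty yet disjoint from $T$. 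The base case $|A| = 1$ is immediate: the unique vertex of $A$ is adjacent to every vertex of $B$, so any $q$ neighbors form a valid $Y$ with $X := A$.

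Finally, I would pull back the recursive output $X \subseteq A \setminus T$, $Y \subseteq B \setminus N_G(T)$: the $q$-expansion transfers verbatim to $G$, and $N_G(Y) = N_{G^\star}(Y) \subseteq X$ because $Y \subseteq B \setminus N_G(T)$ rules out any edge from $Y$ into $T$. Polynomial time follows from the fact that each recursive step only requires a maximum-matching computation in $G'$, together with the standard subroutine to extract a minimal Hall set when Hall's condition fails (cf.\ the remark after Proposition~\ref{prop:hall-theorem}); at most $|A|$ such steps occur before termination. The main obstacle is really the step $T \neq A$: once it is in hand, everything else is bookkeeping and routine inequality chasing.
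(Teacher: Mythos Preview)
Your proof is correct. The paper itself does not give a proof of Lemma~\ref{lem:expansion-lemma}; it merely cites Thomass\'e~\cite{Thomasse2010} and moves directly to applications, so there is nothing in the paper to compare against line by line. That said, your argument is exactly the standard one (also the proof in~\cite{FLSZ19}): blow up $A$ into $q$ copies, test Hall's condition, and if it fails, peel off the projected Hall violator $T$ and recurse on $G[(A\setminus T)\cup(B\setminus N_G(T))]$. The verification that the hypotheses persist on the smaller instance, the use of the no-isolated-vertex assumption to rule out $T=A$, and the neighborhood containment $N_G(Y)\subseteq X$ via $Y\cap N_G(T)=\emptyset$ are all handled correctly. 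The polynomial-time claim is fine as stated: one maximum-matching computation in $G'$ per level and at most $|A|$ levels.
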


\begin{figure}[t]
\centering
	\includegraphics[scale=0.3]{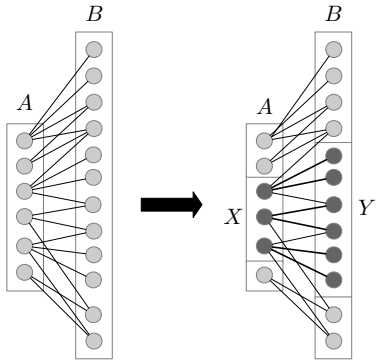}
	\caption{An illustration of Lemma \ref{lem:expansion-lemma} with $q = 2$.}
\label{fig:expansion-lemma-basic}
\end{figure}

When $q=1$, note that the sets $Y, X$ and $V(G) \setminus (X \cup Y)$ form a crown decomposition with crown, head, and body, respectively. 
Hence, expansion lemma can be seen as a generalization of Lemma~\ref{lemma:crown-lemma2}. 
In particular,  it allows us to work
with objects that need to be hit that can be of size more than $2$.
It gives a way to ``encode'' disjoint objects of size $q+1$ such that any solution must hit these objects. 
We use this to argue that there is a solution containing $X$, noting that every object $Y$ is part of can be transformed into a solution containing $X$.

We illustrate this with a few examples below. To encode objects in this way, we usually construct an auxiliary bipartite graph where 
we apply expansion lemma.

\subsection{$p$-Component Order Connectivity}
\label{sec:p-comp-oc}

In {\sc $p$-Component Order Connectivity}, the goal is to find a subset
of
$k$ vertices in a given graph, such that every component of the remaining graph has size at
most
$p$, for a fixed integer $p$. More formally,

\defparproblem{{\sc $p$-Component Order Connectivity}}{A graph $G$ and an integer $k$.}{$k$.}{Does $G$ contain a subset $S$ of vertices, $|S| \leq k$, such that every connected component of $G - S$ is of size at most $p$?}

Note that when $p=1$, $G-S$ is an independent set.
Thus {\sc $p$-Component Order Connectivity} is a generalization of {\sc Vertex Cover}.
We explain how a kernel with $\OO(p^3 k)$ vertices can be obtained for this problem.
Note that improved kernels are available for this problem using stronger versions of expansion lemma (see Section \ref{sec:new-double-expansion-lemma}), we give an $\OO(k^3 p)$ kernel to illustrate a simple application of Lemma \ref{lem:expansion-lemma}.
The results in the literature are improvements of this result using some new tool (available in Section \ref{sec:new-double-expansion-lemma}).

We can get an approximate solution $X$ of size at most $(p+1)k$ 
by repeatedly finding connected
subsets of size $p+1$, adding it to $X$ and deleting it from $G$ (if $|X| > (p+1)k$, then it is a no-instance). We can safely remove all the components in $G-X$ of size at most $p$ as its vertices cannot be part of any optimal solution. We now use an auxiliary bipartite graph $Q$ with $X$ on one
side
and components of $G-X$ on the other side. We add an edge between a vertex $u \in X$ and
a component $D$ in $G-X$ if $u$ has a neighbor in $D$. If the number of components in $G-X$ is more
than $p(p+1)k$, then the conditions for Lemma \ref{lem:expansion-lemma} are satisfied with $q=p$. Thus, we get a head $H \subseteq X$ and
a crown
$C$ that is a subset of connected components in $G-X$, such that there is a $p$-expansion of $H$ into $C$ and $N_Q(C) \subseteq H$.
Our reduction rule, like that of {\sc Vertex Cover}, is to include all of $H$ into the solution, remove $H$ and components of $C$ and drop $k$ by $|H|$.
We again aim to claim that for an optimal solution $S$, the set $S' = (S \setminus (H \cup C)) \cup H$ is also an optimal solution.
The $p$-expansion gives a collection of $|H|$ disjoint connected sets of size
$(p+1)$ that any solution needs to hit. Hence $|S'| \leq |S|$.
We use that $N_Q(C) \subseteq H$ and the graph on $C$ contains only components
of
size at most $p$ to claim that $S'$ is indeed a solution. This is because all the connected sets of size more than $p$ that vertices in components of $C$ are part of, must contain a vertex in $H$. These are hit by $S'$.
Thus the number of connected components in $G-X$ is bounded by $p(p+1)k$.
Each
of these components has size at most $p$. Hence we get a kernel with $\OO(p^3k)$ vertices.

\subsection{Undirected Feedback Vertex Set}

 
\defparproblem{{\FVS} (FVS)}{An undirected graph $G = (V, E)$ and an integer $k$.}{$k$.}{Is there a set $S \subseteq V(G)$ of at most $k$ vertices such that $G - S$ is acyclic?}

Thomass\'e \cite{Thomasse2010} gave a kernel with $\OO(k^2)$ vertices for {\FVS}. The main idea is to use expansion lemma to bound the maximum degree of the graph to $\OO(k)$. 

Suppose a vertex $v$ in the graph has degree more than $11k$. In polynomial time, one can determine~\cite{gallai1961maximum,FLSZ19} that 
$v$ is part of at least $k+1$ cycles with $v$ as the only intersection (in which case $v$ is forced 
into the solution) or find a set $S_v$ of size at most $3k$ that hits all the cycles passing through $v$. 
The next step is to bound the number of components in the graph $G - (\{v\} \cup S_v)$ that has a vertex adjacent to $v$.
Firstly, note that $v$ is adjacent to exactly one vertex in such components as otherwise, there is a cycle containing $v$ not intersecting $S_v$.
Such components can be disregarded in this graph which can contain cycles.
Note that the number of such components is at most $k$ as they form disjoint cycles. 

Consider the auxiliary bipartite graph where one partition is $S_v$ and the other partition is the remaining components of  $G - (\{v\} \cup S_v)$ and add edges if a vertex in $S_v$ is incident to a vertex in the component.
If the number of components is more than $6k$, then Lemma \ref{lem:expansion-lemma} can be used with $q=2$ to find subsets with each vertex in $S_v$ forming a $2$-expansion.
Note that since $v$ is adjacent to these components, this provides us a set of cycles containing $v$ corresponding to each $2$-expansion.
Using the presence of these cycles, it can be proven that any cycles hit by vertices in these components can be hit by either $v$ or all the vertices in the expansion subset of $S_v$.
The authors encode this by creating disjoint $2$-cycles containing $v$ and each vertex in the expansion subset of $S_v$ if not already present.

This bounds the number of such connected components to $2|S_v| \leq 6k$. 
Thus, the number of vertices adjacent to $v$ is bounded by $6k + |S_v| + k \leq 10k$. At most, $k$ of these vertices can have $2$ edges to $v$ as we can delete $v$ if it is part of $k+1$ cycles intersecting only at $v$.
Thus, the degree of $v$ is bounded by $11k$.

From other reduction rules, it is possible to prove that the graph has minimum degree three.
Finally, in this case, it can be proved that {\FVS} admits a kernel with $\OO(k\Delta)$ vertices where $\Delta$ is the maximum degree of $G$. Thus, we have a kernel with $\OO(k^2)$ vertices as $\Delta \leq 11k$.

Recently, Iwata \cite{Iwata17} has also obtained a kernel with $\OO(k^2)$ vertices in $\OO(m + n)$ time.

\subsection{Cluster Vertex Deletion}
\label{subsection:cvd-subq}
\defparproblem{{\sc Cluster Vertex Deletion (cvd)}}{A graph $G$ and an integer $k$.}{$k$.}{Does $G$ contain a subset $S$ of at most $k$ vertices such that every connected component of $G\setminus S$ is a clique?}


The {\sc Cluster Vertex Deletion} problem can be formulated as a {\sc $3$-Hitting Set} problem with $U= V(G)$ and  $\FF$ being the vertex sets of all induced paths on three vertices ($P_3$'s) of $G$. Abu-Khzam~\cite{abu2010kernelization} obtained a kernel with $\OO(k^2)$ elements for {\sc $3$-Hitting Set}. This kernel can be adapted to obtain kernels with $\OO(k^2)$ vertices for {\sc Cluster Vertex Deletion}.

Fomin et al. \cite{FominLLSTZ19} have obtained a kernel with $\OO(k^{5/3})$ vertices using Lemma \ref{lem:expansion-lemma} and more sophisticated techniques. We will see a brief overview of this kernel here.

\medskip \noindent{\bf Bounding the number of cliques.} We give a simple argument to show how Lemma \ref{lem:expansion-lemma} can be used to bound
the number of cliques to $6k$.

First, find a maximal collection $\mathcal{P}$ of vertex disjoint $P_3$'s in $G$. If $|\mathcal{P}| > k$, we return a no-instance as there is no optimal cluster vertex deletion set of size at most $k$.
Otherwise, set $S$ as the subset of vertices in $\mathcal{P}$. Note that $|S| \leq 3k$. We use an auxiliary bipartite graph $Q$ with $S$ on one side and components (cliques) of $G-S$ on the other side. 
Add an edge between a vertex $u \in S$ and a component $D$ in $G-S$ if $u$ has a neighbor in $D$.
	Remove isolated cliques if any. If the number of remaining cliques is more than $6k$, then the conditions are satisfied for Lemma \ref{lem:expansion-lemma} with  $q=2$.
	This provides a head 
$H \subseteq S$ and a crown $C$ that is a subset of components in $G-S$, such that there is a $2$-expansion from $H$ to $C$ and $N_Q(C) \subseteq H$.
The reduction rule like that of {\sc Vertex Cover} is to include all of $H$ into the solution, remove $H$ and the components of $C$ and drop $k$ by $|H|$. This is because it is easy to show that for any optimal solution $X$, the set $X' = X \setminus (H \cup C) \cup H$ is also an optimal solution.
The $2$-expansion gives a collection of $|H|$ disjoint $P_3$'s that 
any solution needs to hit. Hence $|X'| \leq |X|$.
Then they use that $N_Q(C) \subseteq H$ and that the graph on $C$ is a cluster graph to claim that $X'$ is indeed a solution. This is because all the $P_3$'s 
that vertices in components of $C$ are part of, have to contain a vertex in $H$. These are hit by $X'$.
Thus the number of cliques in $G-S$ is bounded by $6k$.

\medskip \noindent{\bf A marking procedure and a weak form of crown decomposition.} 
After bounding the number of cliques in $G - S$ by $6k$, the authors \cite{FominLLSTZ19} perform a marking procedure where we repeatedly associate an edge in $G-S$ with each vertex in $S$ such that exactly one endpoint of the edge is adjacent to that vertex.
If $k+1$ edges are associated with a vertex in $S$, it can be concluded that this vertex must be in the solution as it is the intersection of $k+1$ $P_3$'s and reduce the instance accordingly.
Otherwise, if roughly $k^{2/3}$ vertices of $S$ have roughly $k^{2/3}$ edges associated with them, call the run `successful' and mark and delete these vertices.
This marking procedure is repeated until a successful run is obtained.

Let $U$ be the set of all marked vertices and $M$ be the set of endpoints of the edges marked in the unsuccessful final run. Let $L = S \setminus U$. The sets $U, V(G) \setminus S$ and $L$ can be viewed as a weak form of crown decomposition with $U$ being the head, $V-S$ being the crown, and $L$ being the body. When crown decomposition is applied to obtain kernels, usually it can be concluded that there is a cluster vertex deletion set that contains all of the head, none of the crown, and all the edges of the crown are within the head. Similarly, in the weak crown decomposition, it can be concluded that {\em most} of the vertices of the head $U$ go into a solution, {\em almost none} of the vertices of the crown $V(G) \setminus S$ is in the solution and {\em almost all} of the edges from $V(G) \setminus S$ are within $U$.

\medskip \noindent{\bf Identifying good/bad cliques and bounding remaining vertices.}  Nonetheless, there is no straightforward reduction rule for this weak crown decomposition that the authors obtain.
Towards this, they classify the cliques in $G-S$ into good cliques and bad cliques (based on the weak form of crown decomposition). Based on its properties, the number of vertices of bad cliques can be bounded to $\OO(k^{5/3})$. For a vertex $s \in S$ and a good clique $C$, the authors \cite{FominLLSTZ19} identify a small subset of the vertices of $C$ based on the number of neighbors of $C$ in $s$.
Then they show that good cliques exhibit a vertex-cover-like behavior; any solution either contains $s$ or the small side of a good clique.
This property can be exploited by using Lemma \ref{lem:expansion-lemma} to bound the number of vertices of the good cliques to $\OO(k^{5/3})$ as well.


\subsection{Other Deletion Problems}
\label{sec:other-expansion-lemma-applications}

In this section, we list some other deletion problems where expansion lemma was used to obtain kernels.

\defparproblem{{\dPVC} ($d$-PVC)}{An undirected graph $G = (V, E)$ and an integer $k$.}{$k$}{Is there a set $S \subseteq V(G)$ of at most $k$ vertices such that $G - S$ has no path of length $d$?}

Here, $d$ is a fixed integer. There have been several previous works on $d$-PVC with $d = 4, d = 5$, etc.
Cereveny and Suchy~\cite{CervenyS19} provided parameterized algorithms for $4$-PVC and $5$-PVC.
Later, Cerveny et al.~\cite{CervenyCS21} obtained a polynomial kernel for {\dPVC} with $\OO(k^2)$ vertices for $d=4$ and $d=5$ using Lemma \ref{lem:expansion-lemma}. They also obtained a polynomial kernel for {\dPVC} with $k^4 d^{\OO(d)}$ vertices and edges for every fixed $d$.

\defparproblem{{\sc VC-2-Mod}}{A graph $G = (V, E)$, a set $S \subseteq V(G)$ such that every vertex of $G - S$ has degree at most 2, and an integer $k$.}{$|S|$}{Does $G$ have a vertex cover of size at most $k$?}

Let $|S| = \ell$. Majumdar et al.~\cite{MRS18} used Lemma \ref{lem:expansion-lemma} to obtain a kernel with $\OO(\ell^5)$ vertices for {\sc VC-2-Mod}.
Note that the connected components of the graph $G - S$ include isolated vertices, paths, and cycles. Lemma \ref{lem:expansion-lemma} is used to bound the number of odd cycle components to $\OO(\ell^3)$. An auxiliary bipartite graph is created with sets of size three in $S$ that form an independent set on one side, and the odd cycle components on the other side, adding edges based on a notion called blocking set. 
After applying Lemma \ref{lem:expansion-lemma} with $q=4$, each $3$-sized independent set in the expansion has four private odd cycles. It can be proved that it is safe to delete one of the four odd cycle components resulting in an instance with $O(\ell^3)$ many odd cycles (without increasing the size of $S$).
Finally, a kernel with $\OO(\ell^5)$ vertices can be obtained with additional arguments.

\defparproblem{{\PWOneVD}}{An undirected graph $G = (V, E)$ and an integer $k$.}{$k$.}{Is there a set $S \subseteq V(G)$ of at most $k$ vertices such that $G - S$ has pathwidth at most one?}

Philip et al.~\cite{PhilipRV10} provided a kernel with $\OO(k^2)$ vertices for {\PWOneVD}.
Cygan et al.~\cite{CyganPPW12} used Lemma \ref{lem:expansion-lemma} to give a simpler kernel with $\OO(k^2)$ vertices.

\defparproblem{{\BlockVD}}{An undirected graph $G = (V, E)$ and an integer $k$.}{$k$.}{Is there a set $S \subseteq V(G)$ of at most $k$ vertcies such that $G - S$ is a block graph?}

Kim and Kwon~\cite{KinKwon15} initiated the study of {\BlockVD} problem from parameterized complexity perspective and provided a kernel with $\OO(k^9)$ vertices.
Later, Agrawal et al. \cite{AgrawalKLS16} improved it by giving a kernel with $\OO(k^4)$ vertices using Lemma \ref{lem:expansion-lemma}.

\defparproblem{{\ChordVD}}{An undirected graph $G = (V, E)$ and an integer $k$.}{$k$.}{Is there a set $S \subseteq V(G)$ of at most $k$ vertices such that $G - S$ has no induced cycle of length at least four?}

Jansen et al.~\cite{JansenP18} provided a kernel with $\OO(k^{161}\log ^{58} k)$ edges for {\ChordVD}.
Later, Agrawal et al.~\cite{agrawal2018feedback} improved their result using Lemma \ref{lem:expansion-lemma} into a kernel with $\OO(k^{12}\log^2 k)$ edges. In addition, Agrawal et al.~\cite{agrawal2018feedback} used the notion of {\em independence degree} of a vertex to obtain this result.

\defparproblem{{\ECT}}{An undirected graph $G = (V, E)$ and an integer $k$.}{$k$.}{Is there a set $S \subseteq V(G)$ of at most $k$ vertices such that $G - S$ has no even cycle?}

Misra et al.~\cite{MisraRRS12} initiated the study of {\ECT} problem from a parameterized complexity perspective. They proved the problem to be FPT and gave a kernel with $\OO(k^2)$ vertices. 
Their kernelization algorithm also uses Lemma \ref{lem:expansion-lemma} as a crucial tool to obtain such a kernel upper bound.

\defparproblem{{\OFVDS}}{A directed graph $D = (V, A)$ and a positive integer $k$.}{$k$.}{Is there a set $S \subseteq V(D)$ of at most $k$ vertices such that $D - S$ is an out-forest?}

 Minch and van Leeuwen \cite{mnich2017polynomial} provided a kernel with $\OO(k^3)$ vertices for {\OFVDS}. 
Later, Agrawal et al.~\cite{AgrawalSSZ18} used Lemma \ref{lem:expansion-lemma} to get an improved kernel with $\OO(k^2)$ vertices.

\defparproblem{{\IFVS}}{An undirected graph $G = (V, E)$ and an integer $k$.}{$k$.}{Is there a set $S \subseteq V(G)$ of at most $k$ vertices such that $S$ is an independent set and $G - S$ is acyclic?}

Misra et al. \cite{MPRS12} studied {\IFVS} problem from a parameterized complexity perspective. 
This problem is a variation of {\FVS} with additional constraints imposed for the feedback vertex set.
They use Lemma \ref{lem:expansion-lemma} to obtain a kernel with $\OO(k^3)$ vertices and edges.

A common theme for the problems 
{\PWOneVD}, {\BlockVD}, {\ChordVD}, {\ECT}, {\OFVDS} and {\IFVS} is that they all have some infinite set of cycles as forbidden subgraphs. The application of Lemma \ref{lem:expansion-lemma} (expansion lemma) for all these problems is inspired by its application by Thomass\'e \cite{Thomasse2010} for {\FVS}. The $q$-expansion from the lemma is used to identify sets of cycles that pairwise intersect in one or two vertices, which has to be hit by the solution. This is used to bound a degree measure for vertices $v$ of $G$ (the measure depends on the problem) by bounding the number of components associated with $v$.

\subsection{Graph Packing Problems}
\label{sec:packing-problems}

Expansion lemma is also used on packing problems. We list a few below.

\noindent
{\sc Cycle Packing} is a well-studied problem in parameterized complexity and kernelization. 
The {\sc Disjoint Cycle Packing} problem asks if there are at least $k$ cycles in a graph that are pairwise disjoint.
It is well-known that {\sc Disjoint Cycle Packing} is FPT and admits no polynomial kernelization unless {\nka} \cite{BodlaenderTY11}.
After that, Agrawal et al. \cite{ALMMS18} introduced relaxation criteria in disjointness constraints and defined the following two problems {\tPDCP} and {\tADCP} for a fixed integer $t$.

\defparproblem{{\tPDCP} ($t$-PDCP)}{An undirected graph $G$ and an integer $k$.}{$k$.}{Are there at least $k$ distinct cycles $C_1,\ldots, C_k$ in $G$ such that for every $i \neq j$, $|V(C_i) \cap V(C_j)| \leq t$?}

\defparproblem{\tADCP}{An undirected graph $G$ and an ingeger $k$.}{$k$.}{Are there at least $k$ distinct cycles $C_1,\ldots, C_k$ in $G$ such that every vertex of $G$ appears in at most $t$ of these cycles?}

Agrawal et al. \cite{ALMMS18} proved that when $t = 1$, {\tPDCP} admits a kernel with $\OO(k^4\log k)$ vertices. This makes crucial use of Lemma \ref{lem:expansion-lemma}.
If $t = |V(G)|$, then {\tPDCP} is polynomial-time solvable.

\defparproblem{\ThreePathPacking}{An undirected graph $G = (V, E)$ and an integer $k$.}{$k$}{Are there $k$ pairwise vertex disjoint induced $P_3$'s in $G$?}

A simple application of Lemma \ref{lem:expansion-lemma} can obtain a kernelization with $\OO(k^2)$ vertices.
Fomin et al. \cite{FominLLSTZ19} provided an improved kernel with $\OO(k^{5/3})$ vertices for {\ThreePathPacking}.
They have used Lemma \ref{lem:expansion-lemma} 
to obtain this result.
Later, Bessy et al. \cite{BessyBTW23SODA} introduced a new technique called {\em rainbow matching} and have improved this result into a kernel with $\OO(k)$ vertices.

\subsection{Other Problems}
\label{sec:other-problems}

In this section, we discuss some applications of expansion lemma other than hitting or packing problems in graphs.

\defparproblem{{\sc Set Splitting}}{A universe $\UU$, a family $\FF$ of subsets of $\UU$, and a positive integer $k$.}{$k$.}{Is there a bipartition of $\UU$ such that at least $k$ sets have non-empty intersection with both parts?}

Lokshtanov et al. \cite{lokshtanov2009even} gave a kernel for  {\sc Set Splitting} parameterized by $k$ with at most $2k$ sets and $k$ elements. They use a notion of strong cut-sets in hypergraphs which is closely related to crowns in graphs.

\defparproblem{{\sc Maximum Internal Spanning Tree}}{A graph $G$ and a positive integer $k$.}{$k$.}{Is there a spanning tree of $G$ with at least $k$ internal vertices?}

Fomin et al. \cite{fomin2013linear} use Lemma \ref{lem:expansion-lemma} with $q=2$ to give a $3k$-vertex kernel for {\sc Maximum Internal Spanning Tree}.

\section{Recent Developments}
\label{sec:new-double-expansion-lemma}

In this section, we discuss recent developments that include several variations of expansion lemma along with their applications.
Such new variations have appeared recently. 
Some of those variants are generalizations, and some of them have stronger and special implications.
This section is primarily devoted to the notions of stronger expansion lemma, double expansion lemma, balanced expansion lemma, balanced crown decomposition, and additive expansion lemma (see the relevant subsections for details).

\subsection{Weighted $q$-Expansion Lemma}
\label{sec:weighted-q-expansion-mithilesh}

We introduced {\sc $p$-Component Order Connectivity} in Section \ref{sec:p-comp-oc} and explained how a kernel with $\OO(p^3 k)$ vertices can be obtained.
In this section, we explain how to obtain a 
kernel with $\OO(p^2 k)$ vertices for the same problem using a weighted version of expansion lemma. The weighted version of expansion lemma was introduced by Kumar and Lokshtanov \cite{KL16}, who gave a kernel with $2pk$ vertices of this problem. A variant of this lemma was also proposed by Xiao \cite{Xiao17a}, who also gave an $\OO(pk)$ kernel for this problem.

We use a simpler version of the weighted expansion lemma from \cite{FLSZ19}. We start with some definitions.

\begin{definition}[Weighted $q$-Expansion]
Consider a bipartite graph $G = (V = A \uplus B, E)$ with a weight function $w: V(G) \rightarrow \{1, \ldots, W\}$ for an integer $W$. Given a subset $X \subseteq V(G)$, let $w(X) = \sum\limits_{x \in X}^{} w(x)$. For an integer $q \geq 1$, a function $f :B \rightarrow A$ is called a {\em weighted $q$-expansion of $A$ into $B$}  if
\begin{itemize}
\item for every $b \in B$, $f(b) \in N(b)$ , and
\item for every $a \in A$, $w(f^{-1}(a)) \geq q - W +1$.
\end{itemize}
\end{definition}

When $W = 1$, a weighted $q$-expansion corresponds to a $q$-expansion as for $a \in A$, $w(f^{-1}(a)) \geq q$ and $w(f^{-1}(a))$ becomes the number of neighbors of $a$ in $B$.

\begin{lemma}[Weighted Expansion Lemma \cite{FLSZ19}]
\label{lemma:weighted-expansion-lemma}
Let $q$ and $W$ be positive integers. Consider a bipartite graph $G = (V = A \uplus B, E)$ with weight function $w: V(G) \rightarrow \{1, \ldots, W\}$ such that 
\begin{enumerate}
\item $w(B) \geq q|A|$, and 
\item there are no isolated vertices in $B$.
\end{enumerate}

Then there exist non-empty vertex sets $X \subseteq A$, $Y \subseteq B$ and a function $f : Y \rightarrow X$ such that
\begin{itemize}
\item $f$ is a weighted $q$-expansion of $X$ into $Y$, and
\item no vertex in $Y$ has a neighbor outside $X$, that is, $N(Y) \subseteq X$.
\end{itemize}
Furthermore, the sets $X$ and $Y$ and the function $f$ can be found in time $\OO(mn^{1.5}W^{2.5})$.
\end{lemma}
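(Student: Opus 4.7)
The plan is to reduce the weighted setting to Lemma \ref{lem:expansion-lemma} via an auxiliary bipartite graph. Construct $G' = (A \uplus B', E')$ from $G$ by keeping $A$ intact and replacing each $b \in B$ by $w(b)$ ``copies'' $b^{(1)}, \ldots, b^{(w(b))}$, each sharing the neighborhood $N_G(b)$. Then $|B'| = w(B) \geq q|A|$, and $B'$ has no isolated vertex since $B$ does not. Applying Lemma \ref{lem:expansion-lemma} to $G'$ with parameter $q$ produces non-empty $X \subseteq A$ and $Y' \subseteq B'$, together with a $q$-expansion $M \subseteq E(G')$ of $X$ into $Y'$ satisfying $N_{G'}(Y') \subseteq X$. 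Define $Y := \{b \in B : \text{some copy of } b \text{ lies in } Y'\}$, which is non-empty; if $a \in N_G(b)$ for some $b \in Y$, picking any copy $b^{(i)} \in Y'$ gives $a \in N_{G'}(b^{(i)}) \subseteq X$, so $N_G(Y) \subseteq X$.

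The remaining task is to construct $f: Y \to X$ with $f(b) \in N_G(b) \cap X$ and $w(f^{-1}(x)) \geq q - W + 1$ for every $x \in X$. For each $x \in X$ and $b \in Y$ let $d_b^x$ denote the number of copies of $b$ matched to $x$ in $M$. Since $M$ is a $q$-expansion, $\sum_{b} d_b^x = q$ for every $x$; since distinct copies of a single $b$ can each be matched at most once, $\sum_{x} d_b^x \leq w(b) \leq W$. Consider the ``pattern'' bipartite graph $\tilde H = (X, Y, \{(x,b) : d_b^x \geq 1\})$. Swapping the order of summation and using $\sum_{x \in S} d_b^x \leq w(b)$ yields, for every $S \subseteq X$, the Hall-type bound $w(\tilde N_{\tilde H}(S)) \geq q|S| \geq (q-W+1)|S|$. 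This is precisely the feasibility condition for the transportation problem of assigning each $b$ wholly to one $x \in \tilde N_{\tilde H}(b)$ while giving every $x$ total weight at least $q - W + 1$; I would realize $f$ by solving the associated degree-constrained subgraph / integer $b$-matching instance on $\tilde H$ (equivalently, starting from the fractional assignment $z_{b,x} = d_b^x / \sum_{x'} d_b^{x'}$ which gives each $x$ inflow at least $q$, and rounding via augmenting-path reassignments). Vertices $b \in Y$ with $\tilde N_{\tilde H}(b) = \emptyset$ are sent arbitrarily to any element of $N_G(b) \cap X$, which is non-empty by the first paragraph.

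The main obstacle is the integer rounding step. A fractional allocation gives each $x$ exactly $q$ units of weight, but routing any single $b$ entirely to one $x^* \in \tilde N_{\tilde H}(b)$ can pull up to $w(b) - 1 \leq W - 1$ units away from a competing $x \in \tilde N_{\tilde H}(b) \setminus \{x^*\}$; the slack term $-W+1$ in the conclusion is exactly what absorbs this loss, and a disciplined rounding procedure (for instance, processing the $x$'s in order of current slack and performing one augmenting-walk reassignment at a time in $\tilde H$) must be verified to never breach the $q - W + 1$ floor. The stated running time $\OO(mn^{1.5}W^{2.5})$ follows from invoking Hopcroft--Karp-style bipartite matching inside the proof of Lemma \ref{lem:expansion-lemma} applied to $G'$, which has $\OO(nW)$ vertices and $\OO(mW)$ edges, together with the overhead of iteratively refining the expansion / Hall set in that proof.
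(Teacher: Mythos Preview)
The paper itself does not prove this lemma --- it is quoted from \cite{FLSZ19} without argument --- so there is no in-paper proof to compare against. Your reduction (blow up each $b\in B$ into $w(b)$ twin copies and invoke Lemma~\ref{lem:expansion-lemma} on the resulting $G'$) is exactly the route taken in \cite{FLSZ19}, and your first paragraph is correct: the extraction of $X$, the definition of $Y$, and the verification that $N_G(Y)\subseteq X$ all go through as written.

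The genuine gap is the one you flag yourself. The Hall-type bound $w(\tilde N_{\tilde H}(S))\ge q|S|$ that you derive is the feasibility condition for a \emph{splittable} transportation problem, not for the unsplittable assignment you actually need (each $b$ sent wholly to one $x$ with every $x$ receiving weight at least $q-W+1$); unsplittable demand-covering is not implied by that Hall inequality alone. Your suggested fix --- start from the fractional allocation $z_{b,x}=d_b^x/\sum_{x'} d_b^{x'}$ and round by augmenting-walk reassignments ``in order of current slack'' --- is plausible intuition but not a proof: reassigning a single $b$ away from some $x$ can remove up to $W-1$ units from $x$, and nothing you wrote prevents several such removals from accumulating at the same $x$ and driving it below $q-W+1$. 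The argument in \cite{FLSZ19} that closes this step exploits more of the structure coming out of the $q$-expansion than just the aggregate Hall bound (in particular, that each $b$ is matched to at most $w(b)\le W$ distinct vertices of $X$, so $\tilde H$ has $Y$-side degree at most $W$), and it has to be carried out explicitly. Until you supply a concrete rounding procedure together with an invariant guaranteeing that no $w(f^{-1}(x))$ ever falls below $q-W+1$, the proposal is a correct plan but not yet a proof.
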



We first provide a kernel with $\OO(p^2k)$ vertices for {\sc $p$-Component Order Connectivity} for the problem (discussed in \cite{FLSZ19}) as a precursor to the kernel with $2pk$ vertices by Kumar and Lokshtanov~\cite{KL16}.
Recall the auxiliary bipartite graph in Section \ref{sec:p-comp-oc} with modulator $X$ on one side and 
$\mathcal{C}$, the connected components of $G-X$, in the other one. For each connected component in $\mathcal{C}$, we assign its weight as the number of vertices in the component. Note that these weights are at most $p$. For all the vertices in $X$, we assign weight $1$.

 We now apply Lemma \ref{lemma:weighted-expansion-lemma} with $q = 2p-1$ to get a subset $H \subseteq X$ and a subset $C \subseteq \mathcal{C}$ such that there is a weighted $q$-expansion of $H$ into $C$. For any optimal solution $S$, we look at the set $S' = S \setminus (H \cup C) \cup H$. The weighted $q$-expansion gives us disjoint connected sets in the auxiliary bipartite graph, each of whose weight is at least $q - p+ 1 \geq p$. Thus, we can conclude that the weighted $q$-expansion represents a connected set of size at least $p+1$ in $G$, which any solution needs to hit. Similar to the argument in Section \ref{sec:p-comp-oc}, we can prove that $S'$ is an optimal solution. Thus we have a reduction rule as long as Lemma \ref{lemma:weighted-expansion-lemma} is applicable. If this is not the case, we have $w(\mathcal{C}) \leq q|X| \leq (2p-1)(p+1)k$ in the auxiliary bipartite graph. From the conclusion that the number of vertices in  $G-X$ is $w(\mathcal{C})$, we have a $\OO(p^2k)$ kernel.

Note that in the kernel with $\OO(p^3 k)$ vertices for {\sc $p$-Component Order Connectivity} described in Section \ref{sec:p-comp-oc}, we were not able to capture the sizes of the components in the auxiliary bipartite graph. We manage to do so using weighted expansions.

Kumar and Lokshtanov~\cite{KL16} use some further ideas and linear programming to improve the kernel size to $2pk$.

\subsection{Stronger Expansion Lemma and Double Expansion Lemma}
\label{sec:new-expansion-double-expansion-lemma}

Fomin et al. \cite{FominLLSTZ19} defined the following stronger notion of a $q$-expansion.
Let $G = (V = A \uplus B, E)$ be a bipartite graph. Let $q > 0$ be an integer, $\hat{A} \subseteq A$, and $\hat{B} \subseteq B$. We say that there is a {\em stronger q-expansion} of $\hat{A}$ into $\hat{B}$ if for all subsets $X \subseteq \hat{A}$, we have $|N_G(X) \cap \hat{B}|  \geq q |X|$. Using this notion, they prove the following version of expansion lemma, which does not need the conditions that $|B| \geq q|A|$ or that there are no isolated vertices in $B$.

\begin{lemma}[Stronger Expansion Lemma~\cite{FominLLSTZ19}]
\label{lemma:new-expansion-lemma}
Let $q$ be a positive integer, and $G = (V = A \uplus B, E)$ be a bipartite graph.
Then there exist (possibly empty) sets $\hat A \subseteq A$ and $\hat B \subseteq B$ such that 
\begin{enumerate}
	\item\label{prop1:new-expansion} $\hat A$ has a stronger $q$-expansion into $\hat B$,
	\item\label{prop2:new-expansion} $N_{G}(\hat B) \subseteq \hat A$, and
	\item\label{prop3:new-expansion} $|B \setminus \hat B| \leq q|A \setminus \hat A|$. 
\end{enumerate}
Moreover, the sets $\hat A$ and $\hat B$ can be computed in polynomial time. See Figure \ref{fig:stronger-expansion-lemma} for an illustration with $q = 2$.
\end{lemma}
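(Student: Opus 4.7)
The plan is to construct $\hat{A}$ and $\hat{B}$ by an iterative peeling procedure that starts with $\hat{A} := A$, $\hat{B} := B$ and maintains two invariants at every step: (I) $N_G(\hat{B}) \subseteq \hat{A}$, and (II) $|B \setminus \hat{B}| \leq q |A \setminus \hat{A}|$. Both invariants hold trivially initially. Each iteration checks whether there is a nonempty $X \subseteq \hat{A}$ with $|N_G(X) \cap \hat{B}| < q|X|$. If no such $X$ exists, the stronger $q$-expansion condition holds by definition, and we return the current $(\hat{A}, \hat{B})$. Otherwise, update $\hat{A} \leftarrow \hat{A} \setminus X$ and $\hat{B} \leftarrow \hat{B} \setminus (N_G(X) \cap \hat{B})$, then repeat.

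Verifying that the invariants survive is the heart of the argument. For (I), any $b$ surviving in the new $\hat{B}$ had $N_G(b) \subseteq \hat{A}_{\mathrm{old}}$ by the previous invariant and, by construction, has no neighbor in the removed set $X$; hence $N_G(b) \subseteq \hat{A}_{\mathrm{new}}$. For (II), the strict violation $|N_G(X) \cap \hat{B}| \leq q|X| - 1$ is crucial:
\[
|B \setminus \hat{B}_{\mathrm{new}}| \;\leq\; |B \setminus \hat{B}_{\mathrm{old}}| + q|X| - 1 \;\leq\; q|A \setminus \hat{A}_{\mathrm{old}}| + q|X| - 1 \;<\; q|A \setminus \hat{A}_{\mathrm{new}}|.
\]
Since $|X| \geq 1$, the quantity $|\hat{A}|$ strictly decreases each iteration, so the procedure terminates after at most $|A|$ rounds.

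The remaining ingredient is finding a violating $X$ (or certifying none exists) in polynomial time. I would pass to the blown-up bipartite graph $G^\star$ in which each $a \in A$ is replaced by $q$ identical copies $a^{1}, \ldots, a^{q}$ sharing the neighborhood $N_G(a)$. A nonempty $X \subseteq \hat{A}$ violates the stronger $q$-expansion condition in $G$ iff the copy-closed set $X^\star = \{a^i : a \in X,\, i \in [q]\}$ violates Hall's condition in $G^\star[\hat{A}^\star \cup \hat{B}]$. A minimal Hall violator in $G^\star$ can be computed by Proposition \ref{prop:hall-theorem}; closing it under copies of each vertex preserves the neighborhood while only growing the left side, so the result projects back to a bona fide violator $X \subseteq \hat{A}$ in $G$. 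The main obstacle I anticipate is this reduction: making sure the copy-closure step is valid and that the absence of any Hall violator in $G^\star$ is \emph{equivalent} to (not merely implied by) the stronger $q$-expansion property of $(\hat{A}, \hat{B})$ in $G$. Once this equivalence is established, combining it with the peeling procedure and the invariant-preservation calculation above yields all three conclusions of the lemma.
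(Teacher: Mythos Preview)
The survey itself does not supply a proof of Lemma~\ref{lemma:new-expansion-lemma}; it states the result and cites~\cite{FominLLSTZ19}. So there is no in-paper proof to compare against. That said, your argument is correct and is essentially the standard proof: iteratively peel off a violating set $X\subseteq\hat A$ together with $N_G(X)\cap\hat B$, maintaining the two invariants you state, and detect violators via the $q$-fold blow-up of $A$ combined with Hall's theorem.

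Your worry about the copy-closure step is unfounded and easy to dispel. If $Y\subseteq\hat A^\star$ is any Hall violator in $G^\star$ and $X\subseteq\hat A$ is its projection, then $X^\star\supseteq Y$ while $N_{G^\star}(X^\star)=N_{G^\star}(Y)$, because every $a\in X$ already has a copy in $Y$ and all copies share the same neighborhood; hence $|N_G(X)\cap\hat B|=|N_{G^\star}(X^\star)\cap\hat B|<|Y|\le|X^\star|=q|X|$. Conversely, if $X\subseteq\hat A$ violates the stronger $q$-expansion inequality then $X^\star$ is a Hall violator in $G^\star$. This gives the full equivalence you were concerned about, and minimality of the Hall violator is not needed. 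One small remark: in your invariant~(II) calculation you obtain a strict inequality after the first peel, which is stronger than required; the lemma only asks for $|B\setminus\hat B|\le q|A\setminus\hat A|$.
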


\begin{figure}[t]
\centering
	\includegraphics[scale=0.3]{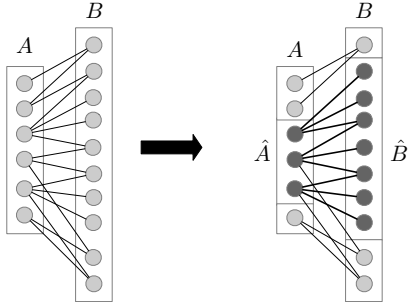}
	\caption{An illustration of Lemma \ref{lemma:new-expansion-lemma} with $q = 2$.}
\label{fig:stronger-expansion-lemma}	
\end{figure}

This lemma is different from Lemma \ref{lem:expansion-lemma} as it does not require that $|B| \geq q|A|$ and there are no isolated vertices in $B$. Also, note that this lemma does not demand that $\hat A$ and $\hat B$ must be non-empty.
If $\hat B = \emptyset$ for some graph $G$, then $\hat A = \emptyset$ as there is a stronger $q$-expansion from $\hat A$ to $\hat B$.
Moreover, since $|B \setminus \hat B| \leq q |A \setminus \hat A|$, we have $|B| \leq q|A|$. 
Therefore, if $|B| > q|A|$, then $\hat B \neq \emptyset$. 

Note that in Lemma \ref{lem:expansion-lemma} (expansion lemma), $|Y| = q|X|$. But in stronger expansion lemma, it could be that $|\hat B| > q|\hat{A}|$. The intuition in applying stronger expansion lemma for a problem instance is that some vertices of $\hat B$ could be safely eliminated from the input graph. Note that since $|B \setminus \hat B| \leq q |A \setminus \hat A|$, to bound the size of $B$ as a function of $|A|$, we might need to remove some of the vertices of  $\hat B$. Observe that for every $u \in \hat A$, there is at least $q$ edges incident. Thus, by keeping exactly $q|\hat A|$ vertices in $\hat B$, the property of a $q$-expansion from $\hat{A}$ is maintained. Thus, the other vertices from $\hat B$, that only has neighbors to $\hat{A}$, could be deleted. 

The following problem {\ArcFourCP} provides an intuition to use the above lemma.


\defparproblem{{\ArcRCP}}{A directed graph $D = (V, A)$ and an integer $k$.}{$k$}{Are there at least $k$ pairwise arc disjoint directed cycles of length $r$ in $D$?}

Babu et al.~\cite{BabuKR22} first proved that for every fixed $r \geq 3$, {\ArcRCP} is NP-Complete in directed graphs of girth $r$, and this NP-Completeness result holds for every $r \geq 4$ when the input graph is bipartite.
After that they provide a kernel with $\OO(k^2)$ vertices and $\OO(k^3)$ arcs for $r = 4$, i.e. for the {\ArcFourCP} problem using stronger expansion lemma (Lemma \ref{lemma:new-expansion-lemma}).
A brief summary of the kernelization algorithm by Babu et al.~\cite{BabuKR22} works as follows.

\begin{enumerate}
	\item Initially, the authors apply some preprocessing rules to the input graph $D$ and ensure that 
		every arc $(u,v)$ of $D$ is contained in at most $(4k-4)$ distinct $4$-cycles, the pairwise intersection of which is $\{u,v\}$.
		\item Let $\XX$ be a collection of maximal set of $4$-cycles such that for every pair of $4$-cycles in $\XX$, there is at most one common vertex. 
		The authors prove that if $|\XX| > 16k^2$, then the input is a YES-instance.
	\item So, it can be assumed that $|\XX| \leq 16k^2$.
	 Let $\PP$ be the set of all $3$-paths that are contained in some 4-cycle in $\XX$. As $|\XX| \leq 16k^2$, $|\PP|$ is $\OO(k^2)$. What is left is to bound the number of vertices in $V(D) \setminus V(\PP)$. 
	\item The authors construct an auxiliary bipartite graph $H$ with bipartitions $A$ and $B$ where $A = \PP$ and $B = V(D) \setminus V(\PP)$. For $u \in B$ and $\{x,y,z\} \in \PP$, there is an edge in $H$ if and only if $\{x, y, z, u\}$ is a 4-cycle in $D$.
	\item One can now apply Lemma \ref{lemma:new-expansion-lemma} with $q = 1$ and get a stronger $1$-expansion with sets $\hat A \subseteq A$ and $\hat B \subseteq B$. It ensures that $|B \setminus \hat B| \leq |A \setminus \hat A|$. Let $\hat M \subseteq E(\hat A, \hat B)$ be the set of edges of $H$ such that every vertex of $\hat A$ is incident to exactly one vertex of $\hat B$ and every vertex of $\hat B$ is incident to exactly one vertex of $\hat A$. 
	\item Finally, the authors apply a reduction rule that deletes the set of vertices from $\hat B$ that is not saturated by $\hat M$. The $4$-cycles that these vertices are part of can be covered by picking one of the vertices from the corresponding $3$-paths in $\hat{A}$. The matching $M$ ensures that there exists a $4$-cycle for each  $3$-path in $\hat{A}$. Thus, the reduction rule is safe.
	\item The reduction rule is no longer applicable when the set of vertices from $\hat B$ that is not saturated by $\hat M$ is empty. In this case, we have $|B| = |B \setminus \hat B| + | \hat B| \leq |A \setminus \hat A| + | \hat A| = |A|$ which is $\OO(k^2)$.
	Thus, we obtain a kernel with $\OO(k^2)$ vertices.
\end{enumerate}

\defparproblem{{\TPT}}{A tournament $D = (V, A)$ and an integer $k$.}{$k$.}{Are there at least $k$ pairwise vertex disjoint directed triangles in $D$?}

Fomin et al. \cite{FominLLSTZ19} provided a kernel with $\OO(k^{3/2})$ vertices that use both Lemma \ref{lem:expansion-lemma} (expansion lemma) and Lemma \ref{lemma:new-expansion-lemma} (stronger expansion lemma).

The authors also went on to prove Lemma \ref{lemma:double-expansion-lemma} stated below that simultaneously provides expansion in multiple graphs.

\begin{lemma}[Double Expansion Lemma \cite{FominLLSTZ19}]
\label{lemma:double-expansion-lemma}
Let $q$ be a positive integer and $G, H_{1},H_{2},\ldots,H_{d}$ be bipartite graphs with bipartitions $A \uplus B,A_{1} \uplus R_{1},\ldots,A_{d} \uplus R_{d}$ respectively, such that for every $i \neq j$, $A_{i} \cap A_{j} = \emptyset, R_{i} \cap R_{j} = \emptyset$, $\cup_{i=1}^{d} A_{i} = A$, and $\cup_{i=1}^{d} R_{i} \subseteq B$.
We can in polynomial time compute $\hat A \subseteq A, \hat B \subseteq B, \hat A_{i} \subseteq A_{i}, \hat R_{i} \subseteq R_{i}$ for every $i \in [d]$ such that
\begin{enumerate}
	\item\label{prop1:double-expansion} $\hat B = \cup_{i=1}^{d} \hat R_{i}$,
	\item\label{prop2:double-expansion} $|B \setminus \hat B| \leq q(|A| + |\cup_{i=1}^{d} \hat A_{i}|)$,
	\item\label{prop3:double-expansion} $\hat A$ has a stronger $q$-expansion into $\hat B$ and for every $i \in [d]$, $\hat A_{i}$ has a stronger $q$-expansion into $\hat R_{i}$ in $H_{i}$, 
	\item\label{prop4:double-expansion} $N_{G}(\hat B) \subseteq \hat A$, and 
	\item\label{prop5:double-expansion} for all $i \in [d]$, $N_{H_{i}}(R_{i}) \subseteq A_{i}$.
\end{enumerate}
\end{lemma}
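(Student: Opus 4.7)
The plan is to apply the Stronger Expansion Lemma (Lemma~\ref{lemma:new-expansion-lemma}) in a coordinated way to $G$ and to each auxiliary graph $H_i$, synchronising the resulting sets through a monotone fixed-point iteration. A single application of Lemma~\ref{lemma:new-expansion-lemma} to $G$ alone returns $\hat{A},\hat{B}$ satisfying properties~\ref{prop3:double-expansion} (for $G$) and~\ref{prop4:double-expansion}, and a single application to each $H_i$ alone returns $\hat{A}_i,\hat{R}_i$ satisfying property~\ref{prop3:double-expansion} (for $H_i$) and property~\ref{prop5:double-expansion}. The trouble is that these applications interact: shrinking some $\hat{R}_i$ inside $H_i$ forces $\bigcup_i\hat{R}_i$ to shrink, which can break the stronger $q$-expansion from $\hat{A}$ into $\hat{B}$ in $G$, and conversely shrinking $\hat{B}$ in $G$ can force further removal of vertices from the $\hat{R}_i$'s. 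Iterating repairs both simultaneously.

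Concretely, initialise $\hat{A}=A$, $\hat{A}_i=A_i$ and $\hat{R}_i=R_i$ for every $i\in[d]$, and $\hat{B}=\bigcup_{i=1}^d R_i$, so that vertices of $B$ lying in no $R_i$ are discarded at once (this is forced by property~\ref{prop1:double-expansion}). Then repeat the following until no set changes:
\begin{enumerate}
\item Apply Lemma~\ref{lemma:new-expansion-lemma} to the induced bipartite graph $G[\hat{A}\cup\hat{B}]$ and replace $(\hat{A},\hat{B})$ by the returned pair.
\item For each $i\in[d]$, set $\hat{R}_i:=\hat{R}_i\cap\hat{B}$ and apply Lemma~\ref{lemma:new-expansion-lemma} to $H_i[\hat{A}_i\cup\hat{R}_i]$, replacing $(\hat{A}_i,\hat{R}_i)$ by the returned pair.
\item Update $\hat{B}:=\bigcup_{i=1}^d\hat{R}_i$.
\end{enumerate}

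All six families of sets are monotonically non-increasing across iterations, so the loop stabilises in polynomially many rounds; since each application of Lemma~\ref{lemma:new-expansion-lemma} runs in polynomial time, so does the whole procedure. At the fixed point, step~1 gives properties~\ref{prop3:double-expansion} (for $G$) and~\ref{prop4:double-expansion}; step~2 gives property~\ref{prop3:double-expansion} (for each $H_i$) and property~\ref{prop5:double-expansion}; and step~3, together with the invariant $\hat{B}\subseteq\bigcup_i R_i$ maintained from initialisation, gives property~\ref{prop1:double-expansion}.

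The main obstacle is the size bound in property~\ref{prop2:double-expansion}, namely $|B\setminus\hat{B}|\le q(|A|+|\bigcup_i\hat{A}_i|)$. A single execution of step~1 contributes a loss $|\Delta\hat{B}|\le q|\Delta\hat{A}|$ by Lemma~\ref{lemma:new-expansion-lemma}, and telescoping gives a total step-1 loss of at most $q|A\setminus\hat{A}|\le q|A|$ vertices from $\hat{B}$. Each step-2 application contributes $|\Delta\hat{R}_i|\le q|\Delta\hat{A}_i|$, which telescopes to a total step-2 loss of at most $q\sum_i|A_i\setminus\hat{A}_i|=q|A\setminus\bigcup_i\hat{A}_i|$ vertices across all the $\hat{R}_i$'s. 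The remaining task is to combine these per-step bounds, together with the vertices of $B$ discarded at initialisation (those outside every $R_i$), into the single inequality $|B\setminus\hat{B}|\le q|A|+q|\bigcup_i\hat{A}_i|$. This requires a careful charging scheme exploiting the fact that every vertex surviving in $\hat{B}$ at the fixed point has neighbours in $H_i$ only inside $\hat{A}_i$, so that the ``useful'' target inside $\bigcup_i R_i$ is effectively controlled by $|\bigcup_i\hat{A}_i|$; turning this intuition into the precise additive bound, and in particular absorbing the initial discard into the $q|A|$ summand, is the technical heart of the proof.
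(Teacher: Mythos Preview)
The paper is a survey and does not prove Lemma~\ref{lemma:double-expansion-lemma}; it merely quotes the statement from \cite{FominLLSTZ19} and uses it as a black box in the discussion of {\FVST}. There is no in-paper argument to compare your proposal against.

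On your proposal itself: the monotone fixed-point iteration is a reasonable way to secure properties~\ref{prop1:double-expansion}, \ref{prop3:double-expansion}, \ref{prop4:double-expansion} and~\ref{prop5:double-expansion}, but the accounting you sketch for property~\ref{prop2:double-expansion} does not close. Telescoping your step-1 and step-2 losses yields at most $q|A\setminus\hat A|+q\sum_i|A_i\setminus\hat A_i|=2q|A|-q|\hat A|-q|\bigcup_i\hat A_i|$ for the portion of $B$ lost inside $\bigcup_i R_i$; this exceeds the target $q|A|+q|\bigcup_i\hat A_i|$ whenever $|\hat A|+2|\bigcup_i\hat A_i|<|A|$, and nothing in your iteration precludes that situation. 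Separately, the ``initial discard'' $|B\setminus\bigcup_i R_i|$ is not bounded by any function of $|A|$ under the stated hypotheses, so it cannot be absorbed into the $q|A|$ summand as you hope---either the survey intends the lemma to be read with $B=\bigcup_i R_i$ (as holds in the application to {\FVST}), or its transcription of property~\ref{prop2:double-expansion} is inexact. Either way, the step you flag as the ``technical heart'' is a genuine gap rather than deferred bookkeeping: the telescoped quantities have the wrong shape, and a different coupling of the global and local invocations of Lemma~\ref{lemma:new-expansion-lemma} (for instance, restricting the $A$-side of the global call to $\bigcup_i\hat A_i$ rather than to all of $A$) appears necessary to recover the stated bound.
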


Double expansion lemma was used in {\FVST} to give an improved kernel. We formally define the problem below.

\defparproblem{{\FVST}}{A tournament $D = (V, A)$ and an integer $k$.}{$k$.}{Is there a set $S \subseteq V(D)$ of at most $k$ vertices such that $D - S$ is acyclic?}

{\FVST} can be cast as a {\sc $3$-Hitting Set} problem, and hence its kernel can be adapted to obtain a kernel with $\OO(k^2)$ vertices.
Fomin et al.~\cite{FominLLSTZ19} used Lemma \ref{lemma:double-expansion-lemma} (double expansion lemma) to obtain a kernel with $\OO(k^{3/2})$ vertices. We give a sketch of the algorithm below.

As hitting directed cycles in a tournament is equivalent to hitting directed triangles in a tournament, a $3$-approximation algorithm follows for {\FVST}.
However, a set $S$ of size at most $7k/3$ can also be obtained due to Minch et al. \cite{MnichWV16} such that $D-S$ is acyclic.
We call an arc $(x,y) \in A$ as {\em strong arc} if $(x,y)$ is part of at least $k+2$ triangles with the other endpoints of the triangles from $S$. In this case, one of $x$ and $y$ has to be in every solution.

Since $D-S$ is acyclic, its vertices have a topological ordering.  
We can extend this to an ordering of vertices of $D$ by adding vertices in $S$ such that it minimizes certain ``conflicts". In this ordering, we can then argue that the vertices of every triangle in $D$ that do not contain a strong arc are within $7k$ distance from each other in the ordering. Hence these triangles are, in a sense {\em local}.

The problem hence boils down to simultaneously hitting strong arcs and the local triangles. This is achieved by using Lemma \ref{lemma:double-expansion-lemma} by constructing a {\em global}  graph corresponding to the strong arcs and {\em local} graphs for the local triangles.

First, the vertices of $D-S$ are broken into groups of size $100k$ according to the topological ordering.
Then, the authors \cite{FominLLSTZ19} look at the first $p=4\sqrt{k}$ such groups $Y_1, Y_2, \dotsc , Y_{p}$ and discard the first and last $7k+1$ vertices of each to form intervals $Y'_1, Y'_2, \dotsc , Y'_{p}$. For $i \in [p]$, let $S_i$ be the set of vertices in $S$ that are placed within $Y'_i$ in the ordering of $V(D)$. Note that the vertices of any local triangle containing a vertex in $Y'_i$ are contained in the set $S_i \cup Y_i$.
Then each interval $Y'_i, i \in [p]$ can be divided into $q$ subintervals $Y_{i,1}, Y_{i,2}, \dotsc , Y_{i,q}$, each of size $3\sqrt{k}$.

Next, they define a global bipartite graph $G$ with bipartition $S \uplus \{Y_{i,j} : i \in [p], j \in [q]\}$ and strong arcs.
Additionally, the authors also define local bipartite graphs $H_i$ with bipartition $S_i \uplus \{Y_{i,j} : i \in [p], j \in [q]\}$ that contains the local triangles.
Finally, they appropriately apply double expansion lemma (Lemma \ref{lemma:double-expansion-lemma}) to obtain expansion and neighborhood containment properties in all of the graphs simultaneously. This allows them to find a vertex in $V(D) \setminus S$, which can be safely deleted.
When the lemma is no longer applicable, the number of vertices is $|S|+ |\cup_{i \in [p]} Y_i| \leq 2k + 100k \cdot 4\sqrt{k}$.
This gives us a summary of the kernel with $\OO(k^{3/2})$ vertices (See \cite{FominLLSTZ19} for more details).

\subsection{Balanced Expansion and Balanced Crown Decomposition}
\label{sec:balanced-expansion}

Now, we introduce a notion of {\em balanced crown decomposition}, {\em balanced expansion}, and some related polynomial-time algorithms developed by Casel et al. \cite{Casel0INZ21}.

\begin{definition}[Balanced Expansion]
\label{defn:balanced-expansion}
Let $G = (V = A \uplus B, E, w)$ be a vertex-weighted bipartite graph, where $w^B_{\max} = \max_{b \in B} w(b)$.
For $q \in \nn_0$, a partition $A_1 \cup A_2$ of $A$, and $f: B \rightarrow A$, the tuple $(A_1, A_2, f, q)$ is called a {\em balanced expansion} if:
\begin{enumerate}
	\item $w(a) + w(f^{-1}(a)) \geq q - w_{\max}^B + 1$ if $a \in A_1$,
	\item $w(a) + w(f^{-1}(a)) \leq q + w_{\max}^B - 1$ if $a \in A_2$,
	\item for all $b \in B$, $f(b) \in N(b)$, and
	\item $N(f^{-1}(A_1)) \subseteq A_1$.
\end{enumerate}
(See Figure \ref{fig:balanced-expansion-q-3} for an illustration.)
\end{definition}

\begin{figure}[t]
\centering
	\includegraphics[scale=0.25]{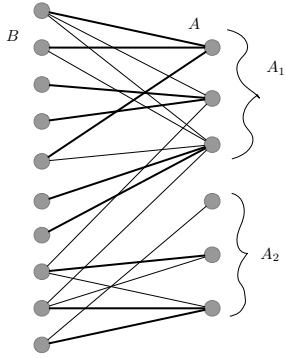}
	\caption{An illustration of a balanced expansion for $q = 3$ and with $w(v) = 1$ for every vertex $v$.  The assignment $f$ is depicted with bold edges.}
\label{fig:balanced-expansion-q-3}
\end{figure}

Based on the above definition, the weight of a vertex of $A$ combined with the weight of its preimage is in the range of $q - w^B_{\max} +1 $ and $q + w^B_{\max} - 1$. The function $f$ is similar to the condition of expansion in Section \ref{sec:expansion-lemma-basic}.
Using the above definition, Casel et al. \cite{Casel0INZ21} proved the following result.

\begin{lemma}[Balanced Expansion Lemma]
\label{lemma:balanced-expansion-lemma}
Consider a vertex-weighted bipartite graph $G = (V = A \uplus B, E, w)$ with no isolated vertices in $B$ and $q \geq \max_{b \in B} w(b) = w^B_{\max}$.
A balanced expansion $(A_1, A_2, f, q)$ for $G$ can be computed in $\OO(mn)$-time.
Furthermore, if $w(A) + w(B) \geq q|A|$, then $A_1 \neq \emptyset$.
\end{lemma}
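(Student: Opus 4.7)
The plan is to prove Lemma \ref{lemma:balanced-expansion-lemma} constructively by an augmenting-path algorithm that maintains the assignment $f: B \to A$ and the partition $(A_1, A_2)$, enlarging $A_1$ by closure-respecting promotions. I would initialize $A_1 = \emptyset$, $A_2 = A$, and any valid $f$ with $f(b) \in N(b)$ (feasible since $B$ has no isolated vertices), and write $\ell(a) := w(a) + w(f^{-1}(a))$. The invariants I want to maintain are that every $a \in A_1$ has $\ell(a) \geq q - w^B_{\max} + 1$ and that $N(f^{-1}(A_1)) \subseteq A_1$; the cap $\ell(a) \leq q + w^B_{\max} - 1$ on $A_2$ is then enforced by the main loop.

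The main loop picks an overloaded $a^* \in A_2$ with $\ell(a^*) > q + w^B_{\max} - 1$ and searches in the auxiliary digraph $D$ on $A_2$ -- whose arcs $a \to a'$ labelled by token $b$ exist whenever $f(b) = a$ and $a' \in N(b) \cap A_2$ -- for an augmenting path $a^* = a_0 \to a_1 \to \cdots \to a_r$ along which the chain of reassignments $f(b_i) := a_{i+1}$ preserves the cap on every vertex touched. If such a path exists, I perform the augmentation, strictly decreasing $\ell(a^*)$ by a positive integer amount. If not, let $S \subseteq A_2$ be the set reachable from $a^*$ in $D$, and promote $S$ into $A_1$. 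Closedness $N(f^{-1}(S)) \cap A_2 \subseteq S$ is immediate from reachability, and the lower bound $\ell(a) \geq q - w^B_{\max} + 1$ for every $a \in S$ follows from the failure of augmentation: since any in-token has weight at most $w^B_{\max}$, routing one into a hypothetical $a \in S$ with $\ell(a) < q - w^B_{\max} + 1$ would respect the cap, producing an augmenting path and contradicting the maximality. A lexicographic-potential argument on the load multiset of $A_2$, together with the fact that $|A_2|$ shrinks on each promotion, yields the claimed $\OO(mn)$ running time.

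The main obstacle is the Furthermore statement, since a plain pigeonhole does not force $A_1 \neq \emptyset$ under $w(A) + w(B) \geq q|A|$: loads can all lie in $[q - w^B_{\max} + 1, q + w^B_{\max} - 1]$ while meeting the total-weight requirement and keeping $A_1 = \emptyset$. To handle this, I would choose the initial $f$ and every reassignment performed throughout so as to lexicographically minimize the sorted load vector from the top; i.e., the final $f$ is extremal with respect to this objective. Under this choice, if the main loop would terminate with $A_1 = \emptyset$, averaging yields some $a^* \in A$ with $\ell(a^*) \geq \lceil (w(A)+w(B))/|A| \rceil \geq q$; the reachable closure $S$ of $\{a^*\}$ in $D$ must then satisfy $\ell(a) \geq q - w^B_{\max} + 1$ for every $a \in S$, since otherwise the witnessing path would give a reassignment violating extremality. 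Promoting this $S$ yields $A_1 \supseteq S \neq \emptyset$ while preserving all the other conditions, completing the proof.
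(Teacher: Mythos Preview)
The survey does not prove this lemma; it merely states it and cites Casel et al.~\cite{Casel0INZ21} for the proof. So there is no in-paper argument to compare against, and I evaluate your sketch on its own. The promotion step contains a real gap: you argue that if no cap-preserving augmenting path from the overloaded $a^*$ exists, then every $a$ in the reachable set $S$ has $\ell(a)\ge q-w^B_{\max}+1$, since routing a token into such a low-load $a$ ``would respect the cap.'' But this only controls the \emph{endpoint}; the chain of reassignments along the $a^*\to a$ path can overload an intermediate vertex, so the non-existence of a cap-preserving path does not preclude a low-load vertex in $S$.

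A small instance exhibits the failure. Take $q=5$, $w^B_{\max}=3$, $A=\{a^*,a_1,a_2\}$ with unit vertex weights, and tokens placed so that the only arcs of $D$ are $a^*\to a_1$ via a weight-$3$ token and $a_1\to a_2$ via a weight-$1$ token (all other tokens have a single neighbour), yielding loads $(\ell(a^*),\ell(a_1),\ell(a_2))=(8,7,1)$. Every reassignment of the weight-$3$ token pushes $a_1$ above the cap $q+w^B_{\max}-1=7$, so no cap-preserving path exists; yet $a_2\in S$ has load $1<3=q-w^B_{\max}+1$, and promoting $S$ violates the $A_1$ lower bound. (A correct balanced expansion here is $A_1=\{a^*,a_1\}$, obtained by first sending the weight-$1$ token to $a_2$; your algorithm misses it from this initial $f$.) Your lex-minimality device for the ``Furthermore'' clause might also rescue the main loop---it does in this example---but you invoke it only there, leaving the main loop with an arbitrary $f$; and even if adopted throughout, you would still owe (i) an $\OO(mn)$ procedure to compute and maintain a lex-minimal $f$, and (ii) a proof that lex-minimality forces $\ell(a)\ge q-w^B_{\max}+1$ on all of $S$, since your one-line extremality argument suffers from the same intermediate-vertex issue.
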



Following is a more fine-grained version of the balanced expansion defined in \cite{Casel0INZ21}.

\begin{definition}[Fractional Balanced Expansion]
\label{defn:fractional-balanced-expansion}
Let $G = (V = A \uplus B, E, w)$ be a vertex-weighted bipartite graph, where $w^B_{\max} = \max_{b \in B} w(b)$.
For $q \in \nn_0$, a partition $A_1 \cup A_2$ of $A$, and $g: E(G) \rightarrow \nn_0$, the tuple $(A_1, A_2, g, q)$ is called a {\em fractional balanced expansion} if:
\begin{enumerate}
	\item $w(a) + \sum_{b \in B} g(ab) \geq q$ if $a \in A_1$,
	\item $w(a) + \sum_{b \in B} g(ab) \leq q$ if $a \in A_2$,
	\item for all $b \in B$, $\sum_{a \in A} g(ab) \leq w(b)$, and
	\item $N(B_U \cup B_{A_1}) \subseteq A_1$ where $B_a = \{b \in B \mid g(ab) > 0\}$ for $a \in A$, $B_{A'} = \cup_{a \in A'} B_a$ for $A' \subseteq A$ and $B_U = \{b \in B \mid \sum_{a \in A} g(ab) < w(b)\}$.
\end{enumerate}
\end{definition}

Such a tuple $(A_1, A_2, g, q)$ can be interpreted as a fractional balanced expansion as follows.
The weights of the vertices of the $f^{-1}$ function are distributed to edges through the $g$ function, i.e., $g$ can be interpreted as a fractional assignment from $v \in B$ to $u \in N(b)$.
For the non-fractional version (in Definition \ref{defn:balanced-expansion}, we aim that for all $v \in B$, $g(uv) = w(u)$ for a unique $u \in A$ while $g(u'v) = 0$ when $u' \in A \setminus \{u\}$.

The authors \cite{Casel0INZ21} prove the following fractional version of Lemma \ref{lemma:balanced-expansion-lemma}.

\begin{lemma}[Fractional Balanced Expansion Lemma]
\label{lemma:frac-balanced-expansion-lemma}
Let $G = (V= A \uplus B, E, w)$ be a vertex-weighted bipartite graph, where $w^B_{\max} = \max_{b \in B} w(b)$.
For $q \in \nn_0$, a partition $A_1 \cup A_2$ of $A$ and $g: E(G) \rightarrow \nn_0$, a fractional balanced expansion $(A_1, A_2, g, q)$ can be computed $\OO(mn)$-time.
Furthermore, if $w(A) + w(B) \geq q|A|$, then $A_1 \neq \emptyset$.
\end{lemma}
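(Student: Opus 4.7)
The plan is to derive Lemma \ref{lemma:frac-balanced-expansion-lemma} from the already-stated non-fractional Lemma \ref{lemma:balanced-expansion-lemma} by a standard vertex-splitting reduction, and then upgrade the construction so that the running time matches $\OO(mn)$ without blowing up the instance. Specifically, I would first construct an auxiliary bipartite graph $G' = (A \uplus B', E', w')$ by replacing each $b \in B$ by $w(b)$ unit-weight copies $b_1, \ldots, b_{w(b)}$, each adjacent in $G'$ to $N_G(b)$, and keeping $A$ and the weights of $A$ unchanged. Then $w'(A) = w(A)$, $w'(B') = w(B)$ and $w^{B'}_{\max} = 1$, so the hypothesis $q \geq w^{B'}_{\max}$ of Lemma \ref{lemma:balanced-expansion-lemma} holds whenever $q \geq 1$ (the case $q = 0$ is handled separately by taking $A_1 = \{a \in A : w(a) = 0\}$ and $g \equiv 0$, which trivially satisfies Definition \ref{defn:fractional-balanced-expansion}).

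Next, I would apply Lemma \ref{lemma:balanced-expansion-lemma} to $G'$ to obtain a balanced expansion $(A_1, A_2, f', q)$ of $G'$, and define the edge weight function $g : E(G) \to \nn_0$ by $g(ab) = |f'^{-1}(a) \cap \{b_1, \ldots, b_{w(b)}\}|$. I claim $(A_1, A_2, g, q)$ is a fractional balanced expansion of $G$. Since $w^{B'}_{\max} = 1$, the inequalities in Definition \ref{defn:balanced-expansion} read $w(a) + w'(f'^{-1}(a)) \geq q$ for $a \in A_1$ and $\leq q$ for $a \in A_2$, which is exactly conditions (i) and (ii) of Definition \ref{defn:fractional-balanced-expansion} after rewriting $w'(f'^{-1}(a)) = \sum_{b \in B} g(ab)$. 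Because $f'$ is a function on $B'$, each copy of $b$ is mapped to exactly one vertex of $A$, so $\sum_{a \in A} g(ab) = w(b)$; in particular condition (iii) holds (with equality) and $B_U = \emptyset$. Condition (iv) reduces to $N_G(B_{A_1}) \subseteq A_1$, which follows from $N_{G'}(f'^{-1}(A_1)) \subseteq A_1$ together with the fact that each copy of $b$ shares the neighborhood of $b$ in $G$. The furthermore clause is immediate, since $w(A) + w(B) = w'(A) + w'(B') \geq q|A|$ triggers the furthermore clause of Lemma \ref{lemma:balanced-expansion-lemma}.

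The main obstacle is the running time. Applying Lemma \ref{lemma:balanced-expansion-lemma} to $G'$ as a black box costs $\OO(m' n')$, where $m' = \sum_{b \in B} w(b) \deg_G(b)$ and $n' = |A| + w(B)$, which can be far larger than $\OO(mn)$ when vertex weights are super-polynomial. To meet the $\OO(mn)$ bound, the idea is to simulate the algorithm of Lemma \ref{lemma:balanced-expansion-lemma} directly on $G$, maintaining the fractional assignment $g(ab)$ in place of the integer assignment $f'$ on the (conceptual) copies of $b$. Concretely, each balancing/augmenting step that the integral algorithm performs along a copy $b_i$ is replaced by a single update on the pair $(ab)$ that moves as large a chunk of weight as is feasible given the residual capacity $w(b) - \sum_{a'} g(a'b)$ and the current load of the endpoints; this lets a single augmentation step simultaneously account for many unit-weight copies. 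Correctness follows from the equivalence above, and the cost of an augmentation is proportional to the number of edges of $G$ traversed, not to the weight being moved, which keeps the total running time within $\OO(mn)$.
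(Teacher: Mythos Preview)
The paper is a survey and does not prove Lemma~\ref{lemma:frac-balanced-expansion-lemma}; it merely states it, attributing the proof to Casel et al.~\cite{Casel0INZ21}. So there is no proof in the paper to compare your attempt against. A relevant remark, though, is that in the original source the logical dependence runs the \emph{opposite} way: the fractional version is established directly (via a flow/augmentation argument), and the integral Lemma~\ref{lemma:balanced-expansion-lemma} is then obtained from it by rounding. Your reduction therefore goes against the grain of the original development and risks circularity if one tracks provenance.

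On the merits of the proposal itself, the vertex-splitting reduction is sound for the correctness part, with two caveats. First, your handling of $q=0$ is wrong: with $A_1=\{a:w(a)=0\}$ and $g\equiv 0$, condition~(ii) of Definition~\ref{defn:fractional-balanced-expansion} requires $w(a)\le 0$ for every $a\in A_2$, which fails for any $a$ with positive weight. The easy fix is $A_1=A$, $A_2=\emptyset$, $g\equiv 0$. Second, and more substantively, your running-time paragraph abandons the reduction: you concede that applying Lemma~\ref{lemma:balanced-expansion-lemma} as a black box to $G'$ gives $\OO(m'n')$ with $m',n'$ depending on the weights, and then propose to ``simulate'' its algorithm on $G$ with bundled weight moves. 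That is no longer a derivation of Lemma~\ref{lemma:frac-balanced-expansion-lemma} from Lemma~\ref{lemma:balanced-expansion-lemma}; it is a direct re-implementation of the underlying augmentation procedure, which is precisely what \cite{Casel0INZ21} does in the first place. If you want to present this as a reduction, you would need an $\OO(mn)$ bound without opening the black box, and that is not available from the statement of Lemma~\ref{lemma:balanced-expansion-lemma} alone.
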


The authors~\cite{Casel0INZ21} use Lemma \ref{lemma:balanced-expansion-lemma} to prove the construction of a structure called {\em $\lambda$-balanced crown decomposition} for a fixed integer $\lambda$ described below.

\begin{definition}[$\lambda$-Balanced Crown Decomposition \cite{Casel0INZ21}] 
\label{defn:balanced-crown-decomposition}
A {\em $\lambda$-balanced crown decomposition} for a vertex-weighted graph $G = (V, E, w)$ is a tuple $(C, H, \fR, f)$ where $(H, C, R)$ is a partition of $V(G)$ and $\fR$ is a partition of $R$ and $f: \cc \cc \rightarrow H$ where $\cc \cc (C)$ is the set of connected components of $G[C]$ such that:
\begin{enumerate}
	\item there are no edges from $C$ to $R$,
	\item $w(Q) < \lambda$ for each $Q \in \cc \cc (C)$,
	\item $f(Q) \in N(Q)$ for every $Q \in \cc (C)$, 
	\item $w(h) + w(f^{-1}(h)) \geq \lambda$ for each $h \in H$, and
	\item $G[R']$ is connected and $\lambda \leq w(R') \leq 3\lambda - 3$ for each $R' \in \fR$.
\end{enumerate}
\end{definition}

The first four properties are those of crown property with some added weight specifications (see \cite{Xiao17a}), and the last property ensures that all parts of the partition are fairly balanced.
Also, the fourth condition implies that every $u \in H$ should be connected to component(s) of $G[C]$ with weight at least $\lambda - 1$.
Observe that the notion of $\lambda$-balanced crown decomposition is something that Casel et al. \cite{Casel0INZ21} have contributed towards obtaining some new results.
But a crucial contribution of Definition \ref{defn:balanced-crown-decomposition} by \cite{Casel0INZ21} is the fifth condition.
The fifth condition says that every part of the partition of $R$ is of weight within the range $[\lambda, 3\lambda - 3]$.

Lemma \ref{lemma:balanced-expansion-lemma} (balanced expansion lemma) was not directly used to prove any kernel upper bound result, but was used by Casel et al. \cite{Casel0INZ21} to prove the ``balanced crown lemma'' that we state below.

\begin{lemma}[Balanced Crown Lemma \cite{Casel0INZ21}]
\label{lemma:balanced-crown-lemma}
Let $G = (V, E, w)$ be a vertex-weighted graph and $\lambda \in \nn$ such that each connected component in $G$ has weight at least $\lambda$.
A $\lambda$-balanced crown decomposition $(C, R, \fR, f)$ of $G$ can be computed in $\OO(k^2 |V| |E|)$-time where $k = |H| + |\fR| \leq \min \{w(G)/\lambda, |V|\}$.
\end{lemma}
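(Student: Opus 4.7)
I would prove the lemma by an iterative peeling procedure driven by Lemma~\ref{lemma:balanced-expansion-lemma} (the Balanced Expansion Lemma). Starting from $G' := G$ with empty $H$, $C$, $\fR$, $f$, in each iteration I would either add one piece $R'$ to $\fR$ with weight in $[\lambda, 3\lambda - 3]$, or add one new vertex $h$ to $H$ along with the crown components that $f$ assigns to it. In both cases the removed vertex set carries weight at least $\lambda$, so the total number of iterations is at most $k \le w(G)/\lambda$.

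\textbf{A single iteration.} Given the current graph $G'$ (which, as an invariant, has every connected component of weight $\ge \lambda$), I form the auxiliary bipartite graph $\hat G = (A \uplus B, \hat E, w)$ as follows. Let $B$ be a collection of inclusion-maximal connected subgraphs of $G'$ of weight strictly less than $\lambda$, obtained by a greedy growth procedure (start from a vertex and add neighbours while the weight stays below $\lambda$). Let $A = V(G') \setminus V(B)$ be the remaining ``heavy-side'' vertices, selected so that $|A|$ is small enough to ensure $w(G') \ge \lambda|A|$ and so that $w^B_{\max} < \lambda$. Put an edge between $a \in A$ and $Q \in B$ whenever $a$ has a $G'$-neighbour in $Q$, inherit weights from $G$ on $A$, and set $w(Q) = \sum_{v \in Q} w(v)$ for $Q \in B$. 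Apply Lemma~\ref{lemma:balanced-expansion-lemma} with $q = \lambda$ to obtain a balanced expansion $(A_{1}, A_{2}, f^{*}, \lambda)$ in $\OO(|V||E|)$ time; its furthermore clause guarantees $A_{1} \neq \emptyset$ whenever $A$ is nonempty. Pick any $h \in A_{1}$ and form $T := \{h\} \cup \bigcup f^{*-1}(h)$; by property~(1) of balanced expansion combined with $w^B_{\max} < \lambda$, we have $w(T) \in [\lambda, 2\lambda - 1]$. If $G'[T]$ is connected, add $T$ to $\fR$; otherwise add $h$ to $H$ and place the components of $f^{*-1}(h)$ into $C$, assigning them to $h$ via $f$. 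Delete $T$ from $G'$ and iterate. When $A$ eventually becomes empty, any remaining heavy component is split into $\fR$-pieces directly by one more greedy growth.

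\textbf{Main obstacle.} The delicate part is maintaining the invariant that every connected component of the shrinking graph $G'$ still has weight $\ge \lambda$, and simultaneously enforcing $w(R') \le 3\lambda - 3$ for every $\fR$-piece. Removing $T$ from $G'$ can orphan small components of weight $< \lambda$; such orphans must either be absorbed into the current $\fR$-piece or carried forward to the next $H$-vertex's crown. The slack between the lower bound $\lambda$ and the upper bound $3\lambda - 3$ is exactly $2(\lambda - 1)$, which is the budget needed to absorb at most two corrections of magnitude $< \lambda$: one from the $w^B_{\max} - 1$ imbalance inherent to Definition~\ref{defn:balanced-expansion} and another from an orphaned component created by the peeling. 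Verifying that this bookkeeping never overflows, and that $|A|$ can always be chosen so the size hypothesis of Lemma~\ref{lemma:balanced-expansion-lemma} holds, is the technical core of the proof. Once this is in place, at most $k$ iterations at cost $\OO(|V||E|)$ each (plus $\OO(k|V|)$ for the greedy constructions and updates) yield the claimed running time $\OO(k^{2}|V||E|)$.
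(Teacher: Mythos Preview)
The paper does not prove this lemma; it is quoted from Casel et al.\ \cite{Casel0INZ21} as part of the survey, with only the remark that Lemma~\ref{lemma:balanced-expansion-lemma} is the tool used in the original proof. There is therefore no in-paper argument to compare against, and your proposal must stand on its own.

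As written it has two genuine gaps. First, the weight claim $w(T)\in[\lambda,2\lambda-1]$ for $h\in A_1$ does not follow from property~(1) of Definition~\ref{defn:balanced-expansion}: with $q=\lambda$ that property yields only $w(h)+w(f^{*-1}(h))\ge \lambda-w^B_{\max}+1$, which can be as small as $2$ when $w^B_{\max}=\lambda-1$, and it gives no upper bound whatsoever (the upper bound in property~(2) applies to $A_2$, not $A_1$). So neither condition~(4) of Definition~\ref{defn:balanced-crown-decomposition} for your head vertices nor the $3\lambda-3$ cap for your $\fR$-pieces is actually established; you would at least need $q$ on the order of $2\lambda$ for the lower bound to reach $\lambda$. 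Second, peeling one head vertex per iteration does not secure condition~(1) of Definition~\ref{defn:balanced-crown-decomposition}: the containment $N(f^{*-1}(A_1))\subseteq A_1$ is a statement about the auxiliary bipartite graph of a \emph{single} iteration, and nothing prevents a crown component you attach to $h$ now from being adjacent in $G$ to a vertex that a later iteration places into some $R'\in\fR$. The actual construction in \cite{Casel0INZ21} works with the entire $A_1$ side of the balanced expansion at once precisely so that this separation transfers to $(C,H,R)$, and iterates to refine the $R$-side; your vertex-by-vertex scheme would need a substantially different mechanism to recover that global guarantee.
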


Casel et al. \cite{Casel0INZ21} used Lemma \ref{lemma:balanced-crown-lemma} to get a kernel for the weighted version of a separation and packing problem that we define below.
Note that {\WSeparation} can be viewed as a weighted variant of the {\sc $p$-Component Order Connectivity} problem, and {\WPacking} is simply its packing analog.

\defproblem{{\WSeparation}}{A vertex-weighted undirected graph $G = (V, E, w)$ and integer $k$.}{Is there a set $S \subseteq V(G)$ of at most $k$ vertices such that every connected component of $G - S$ has total weight less than $W$?}

\defproblem{\WPacking}{A vertex-weighted undirected graph $G = (V, E, w)$ and integers $k$.}{Are there $k$ pairwise disjoint sets $V_1,\ldots,V_k$ such that for every $1 \leq i \leq k$, $V_i \subseteq V(G), w(V_i) \geq k$ and $G[V_i]$ is connected?}

For both these problems Casel et al.\cite{Casel0INZ21} provided kernels with $3k(W - 1)$ vertices, that can 
be computed in $\OO(k^2 |V||E|)$-time.

Finally, they also studied the {\sc Balanced Connected Partition} problem stated below.

\defproblem{\BCP}{A vertex-weighted undirected graph $G = (V, E, w)$ and integer $k$.}{Is there a partition $V_1,\ldots,V_k$ of $V(G)$ such that for each $i \in [k]$, $w(V_i) \geq W$ and $G[V_i]$ is connected?}

In the same paper, they use Definition \ref{defn:balanced-crown-decomposition} and Lemma \ref{lemma:balanced-crown-lemma} to give a polynomial-time for {\BCP} that outputs a partition $V_1^*,\ldots,V_k^*$ of $V(G)$ such that for each $i \in [k]$, $G[V_i]$ is connected and $w(V_i^*) \geq W/3$.

\subsection{Additive Expansion Lemma}
\label{sec:additive-expansion}

Koana et al. \cite{KoanaNN22arxiv} introduced the following variant of expansion and expansion lemma.

\begin{definition}[$q$-Additive Expansion,~\cite{KoanaNN22arxiv}] 
\label{defn:additive-expansion}
Let $G = (V = A \uplus B, E)$ be a bipartite graph and $q$ be a positive integer.
Then, $G$ is said to have a {\em $q$-additive expansion} from $A$ into $B$ if for every $B' \subseteq B$ of size $q$, there is a matching saturating $A$ in $G[A, B \setminus B']$
\end{definition}

For completeness, we give proof of the following proposition that is stated in \cite{KoanaNN22arxiv}.

\begin{proposition}
\label{prop:additive-expansion-hall-condition}
Let $G = (V = A \uplus B, E)$ be a bipartite graph. 
There is a $q$-additive expansion from $A$ into $B$ if and only if for every  nonempty $X \subseteq A$, $|N(X)| \geq |X| + q$.
\end{proposition}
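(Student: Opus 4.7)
The plan is to obtain both directions as essentially direct applications of Hall's theorem (Proposition~\ref{prop:hall-theorem}) to the bipartite subgraphs $G[A, B \setminus B']$ as $B'$ ranges over $q$-subsets of $B$, with only the forward direction requiring a small case split.

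For the backward direction, I would assume that $|N(X)| \geq |X| + q$ for every nonempty $X \subseteq A$ and fix an arbitrary $B' \subseteq B$ with $|B'| = q$. Writing $G' = G[A, B \setminus B']$, observe that for every $X \subseteq A$ we have $N_{G'}(X) = N_G(X) \setminus B'$, so
\[
|N_{G'}(X)| \;\geq\; |N_G(X)| - |B'| \;\geq\; (|X| + q) - q \;=\; |X|,
\]
(trivially for $X = \emptyset$). Thus Hall's condition holds in $G'$, and Proposition~\ref{prop:hall-theorem} yields a matching of $G'$ saturating $A$. Since $B'$ was arbitrary, $G$ has a $q$-additive expansion from $A$ into $B$.

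For the forward direction, suppose $G$ admits a $q$-additive expansion and fix a nonempty $X \subseteq A$. The plan is to force the deletion of $q$ vertices of $B$ that are particularly harmful to $X$ and then read off the bound from Hall's condition on the resulting graph. If $|N(X)| \geq q$, choose any $B' \subseteq N(X)$ with $|B'| = q$; by assumption $G - B'$ has a matching saturating $A$, so Hall's condition applied to $X$ in $G - B'$ gives $|N(X) \setminus B'| \geq |X|$, i.e., $|N(X)| \geq |X| + q$. If instead $|N(X)| < q$ (and $|B| \geq q$, which is the only setting in which the definition is non-vacuous), pick any $B' \subseteq B$ with $|B'| = q$ and $N(X) \subseteq B'$; then $X$ has empty neighborhood in $G - B'$, contradicting the existence of a matching saturating $A$. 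Hence this case cannot occur and we are done.

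I do not expect any substantive obstacle: the whole argument is essentially a reformulation of Hall's theorem, and the only point requiring care is the selection of $B'$ in the forward direction so that it both has size exactly $q$ and either covers or intersects $N(X)$ appropriately. One edge case worth flagging is $|B| < q$, where the additive expansion is vacuously satisfied but the Hall-type inequality can fail; as is standard, the proposition is understood in the non-trivial regime $|B| \geq q$.
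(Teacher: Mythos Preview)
Your proof is correct and follows essentially the same Hall's-theorem-based approach as the paper. In fact, your explicit case split on whether $|N(X)| \geq q$ in the forward direction is slightly more careful than the paper's argument, which silently assumes one can always select $q$ vertices from $N_G(X)$.
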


\begin{proof}
We will prove this using Hall's theorem.

Consider the forward direction ($\Rightarrow$) first.
Suppose that $G$ has a $q$-additive expansion from $A$ into $B$.
Then, for every $B' \subseteq B$ of size $q$, there is a matching saturating $A$ in $G' = G[A, B \setminus B']$.
It means that for every non-empty $X \subseteq A$, $|N(X) \setminus B'| \geq |X|$.
Consider the value of $|N_G(X)|$. 
If $|N_G(X)| < |X| + q$, then consider a set of an arbitrary set $\hat B$ of $q$ vertices from $N_G(X)$.
Clearly, $\hat B \subseteq B$ has $q$ vertices.
But, then in the graph $\hat G = G[A, B \setminus \hat B]$, for the same set $X \subseteq A$, it holds that $|N_{\hat G}(X)| < |X|$.
Then it leads to a contradiction that there is a matching saturating $A$ in $\hat G$.

Let us now prove the backward direction ($\Leftarrow$).
Suppose that for every nonempty $X \subseteq A$, $|N(X)| \geq |X| + q$.
We have to justify that for every $B' \subseteq B$ of size $q$, there is a matching saturating $A$ in $G[A, B \setminus B']$.
Suppose that for some $B^* \subseteq B$ of size $q$, there is no matching saturating $A$ in $G^* = G[A, B \setminus B^*]$.
Then, by Proposition \ref{prop:hall-theorem} (hall's theorem), there is $X \subseteq A$ such that $|N_{G^*}(X)| < |X|$.
As $B^*$ is of size $q$, then in $G$ also, it holds that $|N_G(X)| < |X| + q$.
This contradicts the premise, which completes the proof of the observation.
\end{proof}

Now, we state the {\em additive expansion lemma} and give a short explanation about how it has been used to provide an improved kernel for {\sc Partial Vertex Cover}.

\begin{lemma}[Additive Expansion Lemma \cite{KoanaNN22arxiv}]
\label{lemma:additive-expansion-lemma}
Let $q \geq 1$ and $G = (V = A \uplus B, E)$ be a bipartite graph. If $|B| > q|A|$ and there is no isolated vertex in $B$, then there exists nonempty $\hat A \subseteq A$ and $\hat B \subseteq B$ such that
\begin{itemize}
	\item there is a $q$-additive expansion of $\hat A$ into $\hat B$, and
	\item no vertex of $\hat B$ has neighbor outside $\hat A$, i.e. $N(\hat B) \subseteq \hat A$.
\end{itemize}
\end{lemma}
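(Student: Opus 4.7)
The plan is to invoke Proposition \ref{prop:additive-expansion-hall-condition} in order to reduce the conclusion to the following combinatorial statement: find nonempty $\hat A \subseteq A$ and $\hat B \subseteq B$ such that $|N_G(Y) \cap \hat B| \geq |Y| + q$ for every nonempty $Y \subseteq \hat A$, and $N_G(\hat B) \subseteq \hat A$. I would prove this by strong induction on $|A|$. Note first that $A \neq \emptyset$: $|B| > q|A|$ together with $|A| = 0$ would force $|B| \geq 1$, but then the no-isolated-vertex assumption on $B$ would fail.

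The easy case is when the additive-Hall condition $|N_G(X)| \geq |X| + q$ already holds for every nonempty $X \subseteq A$; then I would set $\hat A = A$ and $\hat B = B$, with $N_G(B) \subseteq A$ being immediate from bipartiteness. Otherwise I pick any nonempty $X \subseteq A$ with $|N_G(X)| \leq |X| + q - 1$. The first sub-step is to rule out $X = A$: under the no-isolated-vertex hypothesis we have $N_G(A) = B$, so $X = A$ would give $|B| \leq |A| + q - 1$; combined with $|B| > q|A|$ this reduces to $(q-1)(|A|-1) < 0$, impossible for $q \geq 1$ and $|A| \geq 1$. Hence $X \subsetneq A$, and I would recurse on the induced bipartite subgraph $G' = G[A' \cup B']$, where $A' = A \setminus X$ and $B' = B \setminus N_G(X)$. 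Verification of the hypotheses for $G'$: any $b \in B'$ retains a neighbor in $G$ by the original assumption, and this neighbor must lie in $A'$ (otherwise $b$ would belong to $N_G(X)$), so $b$ is not isolated in $G'$. For the cardinality condition,
\[
|B'| = |B| - |N_G(X)| > q|A| - (|X| + q - 1) = q|A'| + (q-1)(|X|-1) \geq q|A'|,
\]
using $|X| \geq 1$ and $q \geq 1$. Since $|A'| < |A|$, induction yields nonempty $\hat A \subseteq A'$ and $\hat B \subseteq B'$ satisfying the additive-Hall condition in $G'$.

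To finish, I lift the conclusion back to $G$. Because $\hat B \cap N_G(X) = \emptyset$, no vertex of $\hat B$ has a neighbor in $X$, so $N_G(\hat B) = N_{G'}(\hat B) \subseteq \hat A$; and for every nonempty $Y \subseteq \hat A$, $N_G(Y) \cap \hat B = N_{G'}(Y) \cap \hat B$, so the additive-Hall inequality transfers from $G'$ to $G$. A final invocation of Proposition \ref{prop:additive-expansion-hall-condition} then produces the desired $q$-additive expansion. The main obstacle is the bookkeeping in the inductive step: showing that the strict inequality $|B| > q|A|$ and the no-isolated-vertex condition both survive the simultaneous deletion of $X$ and $N_G(X)$. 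The critical observation making this work is that the ``slack'' $(q-1)(|X|-1)$ is nonnegative, so the strict inequality on $|B|$ degrades by exactly the right amount to preserve $|B'| > q|A'|$ and keep the induction running.
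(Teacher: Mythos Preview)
The paper does not actually supply its own proof of Lemma~\ref{lemma:additive-expansion-lemma}; it merely states the result with attribution to Koana et al.~\cite{KoanaNN22arxiv}. So there is no in-paper argument to compare against.

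Your proof is correct. The reduction via Proposition~\ref{prop:additive-expansion-hall-condition} to an additive Hall-type condition, followed by strong induction on $|A|$ that either terminates (when the condition holds globally) or peels off a violating set $X$ together with $N_G(X)$, is exactly the standard template used to prove the ordinary expansion lemma and its variants. All the checks are handled cleanly: the exclusion of $X=A$ via $(q-1)(|A|-1)<0$, the preservation of the no-isolated-vertex property in $B'$ (since any neighbor of $b\in B'$ inside $X$ would place $b$ in $N_G(X)$), the arithmetic $|B'|>q|A'|+(q-1)(|X|-1)\ge q|A'|$, and the observation that $\hat B\cap N_G(X)=\emptyset$ makes neighborhoods in $G$ and $G'$ coincide on $\hat A,\hat B$. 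The base of the induction is implicitly covered because when $|A|=1$ the only nonempty $X$ is $A$ itself, which you have ruled out, so the easy case applies. Nothing is missing.
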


\defparproblem{{\sc Partial Vertex Cover}}{An undirected graph $G = (V, E)$ and two integers $k, \ell$.}{$k + \ell$}{Is there a set $S$ of at most $k$ vertices such that $G - S$ has at most $|E(G)| - \ell$ edges?}

First, the authors \cite{KoanaNN22arxiv} provide a kernel with $(\ell + 2)(k + \ell)$ vertices that uses expansion lemma (Lemma \ref{lem:expansion-lemma}). It works as follows.

\begin{itemize}
	\item First, they consider the LP formulation of {\VC}. By Nemhauser and Trotter \cite{NemhauserT75}, there is a fractional optimal solution for LP formulation of {\VC} such that all vertices are assigned values either 0, 1, or 1/2.
	\item Let $V_0, V_1, V_{1/2}$ denote the vertices that are assigned values 0, 1, 1/2, respectively. If $|V_1| + |V_{1/2}| \geq k + \ell + 1$, then the instance is a no-instance.
	\item Otherwise, they construct an auxiliary bipartite graph, and apply expansion lemma to get a kernel with $(\ell + 2)(k + \ell)$ vertices.
\end{itemize}

After that, they give a proof that using additive expansion lemma (Lemma \ref{lemma:additive-expansion-lemma}), a kernel with $(\ell' + 1)(k + \ell)$ vertices can be obtained such that $\ell' = \max\{\ell, 1\}$.
The proof has similar structures, but a crucial difference is that they use Lemma \ref{lemma:additive-expansion-lemma} instead of Lemma \ref{lem:expansion-lemma}.

\section{Conclusions and Future Research}
\label{sec:conclusion}
Our aim in this survey is to bring variations and parameterized complexity applications of an important tool called expansion lemma into one place. 
Different authors have obtained variations to suit the specific problems they deal with, and so some of the applications tend to be applicable to specific kinds of problems.
The survey is non-exhaustive simply because new variations and applications are exploding at a fast pace. We hope that our survey will trigger further applications of these variations in parameterized complexity. 
It will be interesting to see more applications of recent variants such as Lemma \ref{lemma:weighted-expansion-lemma}, Lemma \ref{lemma:double-expansion-lemma}, and Lemma \ref{lemma:additive-expansion-lemma} and to find tighter bounds for the kernel results for which they are applied.
In particular, following research directions would be interesting to investigate.
\begin{itemize}
	\item There are some other well-studied vertex deletion problems in kernelization perspective, e.g., an $\OO(k^2)$ vertex kernel for {\sc Split Vertex Deletion} (see~\cite{AgrawalGJK20}) but has a kernel lower bound of $\OO(k^{2 - \varepsilon})$ edges \cite{dell2014satisfiability}. It would be interesting to see if one of these variants of expansion lemma can be useful to get a subquadratic vertex kernel for these problems.
	\item Recently, Jacob et al. \cite{JacobM020,JacobMR21} have introduced the study of vertex deletion problems to scattered graph classes. While they have proved parameterized complexity of this problem, kernelization complexity is open and unexplored. It would be nice to explore if expansion lemma or some of its variants can be useful for this.
\end{itemize}

\end{document}